\def\BibTeX{{\rm B\kern-.05em{\sc i\kern-.025em b}\kern-.08em
    T\kern-.1667em\lower.7ex\hbox{E}\kern-.125emX}}
\newtheorem{proposition}{Proposition}
\newtheorem{theorem}{Theorem}
\newtheorem{lemma}{Lemma}
\newtheorem{remark}{Remark}
\newtheorem{definition}{Definition}
\newcommand{\Ibb}{\mathbb{I}}
\newcommand{\norm}[1]{\left\lVert#1\right\rVert}
\begin{document}
\title{Hybrid quantum-classical framework for Betti number estimation with applications to topological data analysis}

\author{Nhat A. Nghiem}
\email{{nhatanh.nghiemvu@stonybrook.edu}}
\affiliation{Department of Physics and Astronomy, State University of New York at Stony Brook, Stony Brook, NY 11794-3800, USA}
\affiliation{C. N. Yang Institute for Theoretical Physics, State University of New York at Stony Brook, Stony Brook, NY 11794-3840, USA}


\author{Tzu-Chieh Wei}
\affiliation{Department of Physics and Astronomy, State University of New York at Stony Brook, Stony Brook, NY 11794-3800, USA}
\affiliation{C. N. Yang Institute for Theoretical Physics, State University of New York at Stony Brook, Stony Brook, NY 11794-3840, USA}

\begin{abstract}
Topological data analysis (TDA) is a rapidly growing area that applies techniques from algebraic topology to extract robust features from large-scale data. A key task in TDA is the estimation of (normalized) Betti numbers, which capture essential topological invariants. While recent work has led to quantum algorithms for this problem, we explore an alternative direction: combining classical and quantum resources to estimate the Betti numbers of a simplicial complex more efficiently. Assuming the classical description of a simplicial complex, that is, its set of vertices and edges, we propose a hybrid quantum-classical algorithm. The classical component enumerates all simplices, and this combinatorial structure is subsequently processed by a quantum algorithm to estimate the Betti numbers. We analyze the performance of our approach and identify regimes where it potentially achieves polynomial speedups over existing quantum methods. We further demonstrate the utility of normalized Betti numbers in concrete applications, highlighting the broader potential of hybrid quantum-classical algorithms in topological data analysis.
\end{abstract}

\maketitle

\section{Introduction}

Recently, there has been a surge of interest in applying quantum computation to \emph{topological data analysis (TDA)}, a field now often referred to as \emph{quantum TDA}. TDA offers a mathematically principled framework, rooted in algebraic topology, for extracting robust and interpretable features from complex datasets~\cite{wasserman2016topological}. It typically begins with data represented as point clouds or graphs, from which higher-order simplices are formed based on pairwise relationships. By identifying cliques (i.e., fully connected subgraphs) with simplices, one obtains a combinatorial object known as a \emph{simplicial complex}. The topological structure of such a complex can then be analyzed through \emph{Betti numbers}, computed via homology theory. Despite the conceptual elegance of this framework, TDA often suffers from severe computational overhead, particularly in large-scale settings.

In a seminal work, Lloyd, Garnerone, and Zanardi~\cite{lloyd2016quantum} introduced a quantum algorithm for estimating Betti numbers, now commonly referred to as the LGZ algorithm. Their approach incorporates several foundational quantum subroutines, including Grover’s search~\cite{grover1996fast}, quantum phase estimation~\cite{kitaev1995quantum}, and Hamiltonian simulation~\cite{berry2007efficient, berry2012black}. This work initiated a line of research investigating quantum advantages in topological inference. Subsequent studies have extended and refined the LGZ framework, both algorithmically and structurally~\cite{ubaru2021quantum, berry2024analyzing, hayakawa2022quantum, mcardle2022streamlined}.

Despite the promise of these developments, recent complexity-theoretic results suggest significant limitations. It was shown in~\cite{schmidhuber2022complexity} that estimating Betti numbers is NP-hard, and that computing them exactly is \#P-hard. Consequently, any efficient classical or quantum algorithm that estimates Betti numbers from a simplicial complex specified by vertices and edges would imply a polynomial-time solution to an NP-hard problem. This outcome is widely considered implausible under standard complexity assumptions. Furthermore, it was previously established in~\cite{crichigno2024clique} that computing Betti numbers is QMA-hard, further reinforcing the theoretical intractability of the task. In addition, the work~\cite{mcardle2022streamlined} provides a comprehensive practical evaluation, concluding that exponential quantum speedups for this problem are unlikely for generic input instances.

These insights motivate a re-examination of the quantum TDA landscape. In this work, we propose an alternative perspective in which quantum and classical resources are integrated, each contributing their respective strengths toward solving topological inference problems. Our framework is based on the following key observations:
\begin{itemize}
    \item Existing quantum algorithms for estimating Betti numbers~\cite{lloyd2016quantum, berry2024analyzing, mcardle2022streamlined, hayakawa2022quantum, nghiem2023quantum, lee2025new} rely on input models such as quantum oracles, quantum RAM (as suggested in~\cite{lloyd2016quantum}), or structured data matrices encoding the simplicial complex. However, the feasibility and practical relevance of these models remain unclear in many applications.
    \item The efficiency of LGZ-type algorithms~\cite{lloyd2016quantum, berry2024analyzing, mcardle2022streamlined, hayakawa2022quantum} generally improves in simplex-dense regimes, where the number of simplices is large. In contrast, real-world datasets often give rise to sparse complexes, where these algorithms exhibit poor scalability and may incur polynomial or exponential runtimes.
    \item If estimating Betti numbers from the graph-level description of a complex (i.e., vertex and edge lists) offers no quantum advantage, then it is natural to investigate whether quantum speedups may emerge in an earlier stage, such as the acquisition or construction of the complex itself.
\end{itemize}

Guided by these considerations, we adopt the setting in which the input is given as \emph{classical graph data}, namely, a list of vertices and edges describing the 1-skeleton of the simplicial complex. With respect to the second observation, we introduce a \emph{hybrid classical-quantum algorithm} that achieves improved performance in the clique-sparse regime. In support of the third point, we identify concrete instances from~\cite{olsthoorn2023persistent, hamilton2024probing} where quantum algorithms allow for exponential improvements in constructing the underlying graph structure. These cases fall within the domain of applicability of our proposed method.\\

\noindent \textbf{Organization.}
The remainder of this paper is organized as follows. In~\cref{sec: hybridalgorithm}, we present our main result: a hybrid classical-quantum algorithm for estimating Betti numbers. \cref{sec: inputmodel} describes the input model and the underlying assumptions. \cref{sec: specificationofcomplex} outlines the classical subroutine used to generate a simplicial complex from graph data. In~\cref{sec: quantumalgorithm}, we describe the quantum subroutine for estimating Betti numbers. This section includes a formal description of our method in~\cref{algo: estimatingBetti}, along with a statement of the main result in~\cref{thm: estimatingbettinumbers}.

\cref{sec: dissectingtheadvantage} offers a detailed comparison between our hybrid algorithm and previous quantum approaches, emphasizing the scenarios in which our method provides a computational advantage. \cref{sec: application} presents several concrete applications where the hybrid algorithm can be effectively employed. \cref{sec: summaryofnecessarytechniques} contains a summary of the quantum techniques used in our algorithm. \cref{sec: reviewofalgebraictopology} provides a self-contained overview of relevant concepts from algebraic topology, intended for readers unfamiliar with the mathematical background.

\section{Hybrid algorithm for estimating Betti numbers for sparse complex}
\label{sec: hybridalgorithm}

\subsection{Input model}
\label{sec: inputmodel}

We consider a classical input model in which the simplicial complex of interest is specified by the underlying graph that encodes its pairwise adjacency. Formally, we are given a finite set of vertices (0-simplices), along with classical information describing which unordered pairs of vertices form edges (1-simplices). This graph representation corresponds to the so-called \emph{1-skeleton} of the simplicial complex, which consists solely of its vertices and edges, excluding any higher-dimensional simplices such as triangles (2-simplices) or tetrahedra (3-simplices).

Our setting is similar in spirit to that of the LGZ algorithm and related approaches. However, a key distinction lies in the input access model: we do not assume a quantum oracle that encodes pairwise connectivity. Instead, the entire 1-skeleton is provided explicitly as classical data.

\subsection{Classical algorithm for specifying the complex}
\label{sec: specificationofcomplex}

Given the pairwise adjacency information from the input graph, we can efficiently identify higher-dimensional simplices by enumerating cliques of specified sizes. The following result is central to this procedure:

\begin{lemma}[Clique enumeration via arboricity~\cite{chiba1985arboricity}]
\label{lemma: chiba}
Let $\mathcal{G} = (V, E)$ be an undirected graph, where $V$ denotes the set of vertices and $E$ the set of edges. Then there exists a classical algorithm that enumerates all $(r{+}1)$-cliques in $\mathcal{G}$ with running time $\mathcal{O}\left( |E| \cdot \alpha(\mathcal{G})^{r-1} \right)$, where $\alpha(\mathcal{G})$ is the \emph{arboricity} of the graph, defined as the minimum number of forests into which the edge set $E$ can be partitioned. It is characterized by the formula
\begin{equation}
    \alpha(\mathcal{G}) = \max_{\substack{H \subseteq \mathcal{G} \\ |V(H)| \ge 2}} \left\lceil \frac{|E(H)|}{|V(H)| - 1} \right\rceil,    
\end{equation}
where $H$ ranges over all nontrivial subgraphs of $\mathcal{G}$.
\end{lemma}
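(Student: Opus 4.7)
The plan is to proceed by induction on $r$, mirroring the classical Chiba--Nishizeki strategy. The base case $r=1$ reduces to enumerating all $2$-cliques, which is exactly the edge list $E$, produced in $\mathcal{O}(|E|)$ time directly from the input. So the content is entirely in the inductive step, where the challenge is to gain exactly one factor of $\alpha(\mathcal{G})$ per increment in clique size.

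The first step is to establish a structural consequence of the arboricity bound: the vertices admit an ordering $v_1, v_2, \ldots, v_n$ such that each $v_i$ has at most $2\alpha(\mathcal{G})$ neighbors among $\{v_{i+1}, \ldots, v_n\}$. This follows from a peeling argument using the characterization in the statement: every nontrivial subgraph $H \subseteq \mathcal{G}$ satisfies $|E(H)|/(|V(H)|-1) \leq \alpha(\mathcal{G})$, and hence contains a vertex of degree at most $2\alpha(\mathcal{G})$; iteratively removing such a vertex induces the desired ordering.

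Given this ordering, the algorithm enumerates $(r{+}1)$-cliques as follows. For each $v_i$, let $N^+(v_i) = \{v_j : j > i,\, (v_i, v_j) \in E\}$ denote its \emph{forward neighborhood}, whose size is at most $2\alpha(\mathcal{G})$. Recursively enumerate all $r$-cliques in the induced subgraph $\mathcal{G}[N^+(v_i)]$; appending $v_i$ to each yields an $(r{+}1)$-clique in which $v_i$ is the smallest-indexed vertex, guaranteeing that each clique is listed exactly once. Correctness of the enumeration is then essentially immediate.

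The main obstacle is the runtime analysis: a naive application of the inductive hypothesis across all forward neighborhoods overcounts, because the same edges may participate in many recursive calls. To close the gap, I would invoke the Chiba--Nishizeki counting bound
\[
\sum_{(u,v)\in E}\min\bigl(\deg(u),\deg(v)\bigr)\;\leq\;2\alpha(\mathcal{G})\cdot |E|,
\]
proved via the same peeling order used to construct the ordering above. Applied at each recursive level, this inequality charges exactly one extra factor of $\alpha(\mathcal{G})$ per additional vertex included in the clique. Telescoping through $r-1$ levels of recursion then produces the claimed complexity $\mathcal{O}\bigl(|E|\cdot \alpha(\mathcal{G})^{r-1}\bigr)$, completing the induction.
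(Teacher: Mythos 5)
The paper does not prove this lemma at all: it is imported as a black-box citation of Chiba--Nishizeki~\cite{chiba1985arboricity}, so there is no internal proof to compare against. Your reconstruction is essentially correct and follows the standard route, though in a slightly modernized form: Chiba--Nishizeki's original algorithm processes vertices in non-increasing degree order and deletes each vertex after recursing on its \emph{full} neighborhood, whereas you use a peeling (degeneracy-style) order and recurse on \emph{forward} neighborhoods. Both yield the same bound, and your variant has the advantage that uniqueness of each listed clique (anchored at its smallest-indexed vertex) is immediate. Two points deserve tightening. First, to apply the inductive hypothesis inside $\mathcal{G}[N^+(v_i)]$ with the same parameter $\alpha(\mathcal{G})$, you need that arboricity is monotone under taking subgraphs; this is immediate from the characterization in the statement but should be said. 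Second, in your forward-neighborhood formulation the inequality that actually closes the recursion is not the quoted $\sum_{(u,v)\in E}\min(\deg u,\deg v)\le 2\alpha(\mathcal{G})|E|$ (that is the tool for the original degree-ordered version) but rather the aggregate bound
\begin{equation*}
\sum_{i}\bigl|E\bigl(\mathcal{G}[N^+(v_i)]\bigr)\bigr| \;\le\; \sum_i \binom{|N^+(v_i)|}{2} \;\le\; \alpha(\mathcal{G})\sum_i |N^+(v_i)| \;=\; \alpha(\mathcal{G})\,|E|,
\end{equation*}
which uses $|N^+(v_i)|\le 2\alpha(\mathcal{G})$ and $\sum_i|N^+(v_i)|=|E|$; iterating this gives the factor $\alpha(\mathcal{G})^{r-1}$. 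Finally, a careful account of the cost of \emph{constructing} each induced subgraph (via vertex marking) is where the original paper spends its effort, and the fully rigorous bound carries an extra factor of $r+1$ that the lemma as stated suppresses; neither issue undermines your argument.
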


Given a graph $\mathcal{G}$, its arboricity can be computed in time $\mathcal{O}(|E| \log |V|)$ using the algorithm in \cite{gabow1988forests}. Although this is efficient in the number of edges, we also note the following upper bound on arboricity \cite{gabow1988forests, nash1964decomposition}:
\begin{equation}
    \alpha(\mathcal{G}) \le \min \left( \max_{v \in V} \deg(v), \, \frac{|E|}{|V| - 1} \right),
\end{equation}
where $\deg(v)$ denotes the degree of the vertex $v$. In particular, if $\max_{v \in V} \deg(v) \in \mathcal{O}(1)$, then the arboricity is also bounded by a constant, and the above algorithm becomes highly efficient.

In addition to this approach, there exists an alternative method for clique enumeration with a different complexity profile:

\begin{lemma}[Clique enumeration via degeneracy~\cite{eppstein2010listing}, Eppstein--Löffler--Strash algorithm]
\label{lemma: epp}
Let $\mathcal{G} = (V, E)$ be an undirected graph with degeneracy $d$. Then all $(r{+}1)$-cliques in $\mathcal{G}$ can be enumerated in time $\mathcal{O}\left( d \cdot |V| \cdot 3^{d/3} \right)$. Here, the degeneracy $d$ of $\mathcal{G}$ is the smallest integer such that every non-empty subgraph of $\mathcal{G}$ contains a vertex of degree at most $d$, i.e.,
\begin{equation}
    d = \max_{H \subseteq \mathcal{G}} \left( \min_{v \in V(H)} \deg_H(v) \right),    
\end{equation}
where $\deg_H(v)$ denotes the degree of vertex $v$ in the subgraph $H$.
\end{lemma}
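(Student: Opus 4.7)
The plan is to follow the standard three-stage approach: build a degeneracy ordering of the vertices, localize the clique enumeration to bounded neighborhoods, and apply the Moon--Moser bound to control the local cost.

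First I would invoke the defining property of degeneracy to produce, in $\mathcal{O}(|V|+|E|)$ time, a linear ordering $v_1, \ldots, v_{|V|}$ of the vertex set---the \emph{degeneracy ordering}---in which each vertex $v_i$ has at most $d$ neighbors among $\{v_{i+1}, \ldots, v_{|V|}\}$. Such an ordering is produced by iteratively removing a vertex of minimum degree in the current subgraph and prepending it to the order; the degeneracy condition guarantees that at every stage the minimum degree is bounded by $d$, so each vertex acquires at most $d$ later neighbors.

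Next I would observe that every $(r{+}1)$-clique $C$ has a unique earliest element $v_i$ in this ordering, and the remaining $r$ vertices of $C$ lie in the ``later neighborhood'' $N^+(v_i) := \{\, v_j : j > i,\ \{v_i, v_j\} \in E \,\}$, whose size is at most $d$. The global enumeration therefore reduces to the following: for each $v_i$, enumerate all $r$-cliques in the induced subgraph $\mathcal{G}[N^+(v_i)]$ and prepend $v_i$ to each. By uniqueness of the earliest element, each $(r{+}1)$-clique of $\mathcal{G}$ is listed exactly once.

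The local task per vertex runs a Bron--Kerbosch recursion with pivoting on the subgraph $\mathcal{G}[N^+(v_i)]$ of size at most $d$. The key analytic input is the Moon--Moser theorem, which bounds the number of maximal cliques in a $d$-vertex graph by $3^{d/3}$; via a Tomita--Tanaka--Takahashi-style pivot rule the recursion tree has at most $3^{d/3}$ leaves, with $\mathcal{O}(d)$ work per leaf for candidate bookkeeping, giving $\mathcal{O}(d \cdot 3^{d/3})$ time per vertex. Summing over all $|V|$ choices of $v_i$ yields the claimed $\mathcal{O}(d \cdot |V| \cdot 3^{d/3})$ total runtime.

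The main obstacle is the pivoting recursion analysis itself: one must argue via the pivot rule that every three consecutive recursion levels eliminate at least three elements from the candidate set, producing an effective branching factor of $3^{1/3}$ per element and thus $3^{d/3}$ leaves overall. I would cite this bound as a black box from the combinatorial clique-listing literature rather than rederive it, since the rest of the argument---ordering, localization, summation---is essentially combinatorial bookkeeping on top of that single extremal input.
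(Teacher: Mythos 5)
The paper itself offers no proof of this lemma; it is imported as a black-box citation to Eppstein--L\"offler--Strash, so the only question is whether your reconstruction is a correct account of the cited result. Your skeleton---linear-time degeneracy ordering, charging each clique to its earliest vertex, running a pivoted Bron--Kerbosch recursion on each later-neighborhood $N^+(v_i)$ of size at most $d$, and invoking Moon--Moser---is exactly the ELS argument, but for the theorem ELS actually prove, which concerns \emph{maximal} cliques. There is a genuine gap between that and what the lemma (and your second paragraph) claims, namely enumeration of \emph{all} $(r{+}1)$-cliques. The pivot rule is precisely the device that prunes non-maximal cliques out of the recursion tree, and the Moon--Moser bound $3^{d/3}$ counts maximal cliques only. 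If you drop pivoting so as to emit every $r$-clique of $\mathcal{G}[N^+(v_i)]$, the recursion has up to $\binom{d}{r}$ leaves, and this is not bounded by $3^{d/3}$: when $\mathcal{G}[N^+(v_i)]$ is complete on $d$ vertices and $r \approx d/2$, there are $\binom{d}{r} = 2^{\Theta(d)}$ such cliques, and since $2 > 3^{1/3}$ the total output $|V| \cdot 2^{\Theta(d)}$ already exceeds the claimed $\mathcal{O}(d\,|V|\,3^{d/3})$ budget. So the statement as written cannot be established by your route, nor in full generality at all---the runtime cannot even cover writing the answer. (A smaller omission: ELS also carry the excluded set $X$ of earlier neighbors into each local Bron--Kerbosch call to certify maximality; your version drops it, which is consistent with wanting all cliques but inconsistent with using pivoting.)

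The honest repairs are either to restate the lemma for maximal cliques, matching ELS verbatim, or to state the all-$(r{+}1)$-cliques version with its true cost: for each $v_i$, enumerate the $r$-subsets of $N^+(v_i)$ and test completeness, giving $\mathcal{O}\bigl(|V| \binom{d}{r} r^2\bigr)$ (or any equivalent bound with a $\binom{d}{r}$ or $2^d$ factor in place of $3^{d/3}$). In the regime the paper actually uses this lemma---$d = \mathcal{O}(1)$, so that $\binom{d}{r}$, $2^d$, and $3^{d/3}$ are all constants---the distinction is absorbed into the $\mathcal{O}(\cdot)$ and the downstream claims survive, but your proof as written conflates two different counting problems, and the exponential factor it advertises is the wrong one for the statement being proved.
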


The degeneracy quantifies how locally sparse the graph is, and satisfies $d \le \max_{v \in V} \deg(v)$. Therefore, the Eppstein--Löffler--Strash algorithm~\cite{eppstein2010listing} is most effective when the graph has bounded maximum degree.

The two lemmas above provide complementary strategies for enumerating $(r{+}1)$-cliques in a given graph $\mathcal{G}$. The algorithm based on arboricity (\cref{lemma: chiba}) has exponential dependence on the clique size $r$ and is thus suitable only for small $r$. In contrast, the degeneracy-based algorithm (\cref{lemma: epp}) avoids dependence on $r$ but incurs exponential cost in the degeneracy $d$, making it practical only for graphs with low local density.

We remark that each $(r{+}1)$-clique in the graph corresponds to an $r$-simplex in the underlying simplicial complex. Therefore, given classical knowledge of the vertices and edges, one can employ the algorithms above to efficiently construct the set of $r$-simplices comprising the complex.

\subsection{Quantum algorithm for estimating the $r$-th Betti number $\beta_r$}
\label{sec: quantumalgorithm}

\subsubsection{Encoding simplices}
\label{sec: encodingsimplexes}

Let the simplicial complex consist of $n$ vertices (i.e., 0-simplices), denoted by $v_0, v_1, \dots, v_{n-1}$. A $r$-simplex $\sigma_r$ is, by definition, a set of $(r{+}1)$ distinct vertices that form a fully connected subgraph, or equivalently, a $(r{+}1)$-clique. Following the encoding strategy proposed in the LGZ algorithm~\cite{lloyd2016quantum}, we map each $r$-simplex to a computational basis state of an $n$-qubit quantum system.

Specifically, for each $r$-simplex $\sigma_r$, we define a bitstring $x \in \{0,1\}^n$ such that the $i$-th bit $x_i = 1$ if and only if the vertex $v_i$ is included in $\sigma_r$, and $x_i = 0$ otherwise. Thus, each $r$-simplex is uniquely represented by an $n$-bit string of Hamming weight $(r{+}1)$. We denote the corresponding quantum state by $\ket{\sigma_r}$.

For example, in a complex with $n = 4$ vertices, the $2$-simplex $\{v_0, v_1, v_3\}$ is encoded as the 4-qubit basis state $\ket{1101}$. Here, the $0$-th, $1$-st, and $3$-rd qubits are in state $\ket{1}$, indicating the presence of $v_0$, $v_1$, and $v_3$ in the simplex, respectively.

This encoding scheme establishes a correspondence between $r$-simplices and computational basis states of fixed Hamming weight. It allows the representation of linear combinations of simplices as quantum superpositions over the appropriate subspace of the $n$-qubit Hilbert space. We will exploit this representation in the subsequent design of our quantum algorithm.


\subsubsection{Block-Encoding of boundary operators}
\label{sec: blockencodingboundaryoperators}

We begin by recalling the definition of the boundary operator $\partial_r$, which maps an oriented $r$-simplex $\sigma_r = [v_0, v_1, \dots, v_r]$ to a formal sum of its oriented $(r{-}1)$-faces:
\begin{align}
    \partial_r [v_0, v_1, \dots, v_r] = \sum_{i=0}^{r} (-1)^i [v_0, \dots, \widehat{v_i}, \dots, v_r],
    \label{eqn: boundarymap}
\end{align}
where the hat $\widehat{v_i}$ indicates omission of the vertex $v_i$.

As described in~\cref{sec: specificationofcomplex}, we assume access to the classical descriptions of all simplices in the complex of order $r$ and $r{-}1$. Let $S_r$ and $S_{r-1}$ denote the respective sets of $r$- and $(r{-}1)$-simplices. The corresponding $r$-chain space $C_r$ is spanned by the encoded quantum basis states $\{ \ket{\sigma_r} \mid \sigma_r \in S_r \}$, and the boundary operator acts linearly as $\partial_r: C_r \rightarrow C_{r-1}$.

Let $\mathcal{H}_r$ denote the subspace of the $n$-qubit Hilbert space spanned by all computational basis states of Hamming weight $r{+}1$. Its dimension is $\binom{n}{r+1}$, and we have $C_r \subseteq \mathcal{H}_r \subseteq \mathbb{C}^{2^n}$.

We encode the boundary operator $\partial_r$ as a matrix $\mathcal{M}_r \in \mathbb{R}^{\binom{n}{r} \times |S_r|}$. Each column of $\mathcal{M}_r$ corresponds to a basis element $\ket{\sigma_r}$ with $\sigma_r \in S_r$, and each row corresponds to one of the $\binom{n}{r}$ possible $(r{-}1)$-simplices (regardless of whether they belong to the complex). The $(i,j)$-th entry of $\mathcal{M}_r$ is nonzero if and only if the $i$-th $(r{-}1)$-simplex is a face of the $j$-th $r$-simplex. By definition of the boundary map~(\cref{eqn: boundarymap}), each column of $\mathcal{M}_r$ contains exactly $r{+}1$ nonzero entries, each equal to either $+1$ or $-1$ depending on orientation.

Given this classical knowledge of $\mathcal{M}_r$, we apply a recent result in \cite{lee2025new} (see their Lemma IV.1 and Appendix C), which shows that the matrix $\frac{1}{|| \mathcal{M}_r ||_F^2}\mathcal{M}_r^\dagger \mathcal{M}_r  $ can be block-encoded using a quantum circuit of depth $\mathcal{O}\left( n + \log |S_r| \right)$, a total number of $\mathcal{O}(1)$ ancilla qubits and $\mathcal{O}(1)$ classical preprocesisng time. We remark that we have adjusted the notation accordingly to our case, compared to what appears in Lemma IV.1 and Appendix C of \cite{lee2025new}.
In our setting, the matrix $\mathcal{M}_r \in \mathbb{R}^{\binom{n}{r} \times |S_r|}$ has $r{+}1$ nonzero entries per column, each being $\pm 1$, and is thus $(r{+}1)$-sparse. The Frobenius norm of $\mathcal{M}_r$ satisfies
\begin{equation}
    \|\mathcal{M}_r\|_F = \left({(r+1) |S_r|}\right)^{1/2}.
\end{equation}

Note that $\mathcal{M}_r^\dagger \mathcal{M}_r = \partial_r^\dagger \partial_r$ holds despite the fact that $\mathcal{M}_r$ includes rows corresponding to $(r{-}1)$-simplices not in $S_{r-1}$. These rows contain only zero entries and thus do not contribute to the product $\mathcal{M}_r^\dagger \mathcal{M}_r$.

For subsequent reference, we summarize the tool mentioned above in the following lemma.

\begin{lemma}[Block-encoding of combinatorial Laplacian]
\label{lemma: blockencodingboundaryoperator}
Given classical access to the sets $S_r$ and $S_{r-1}$, there exists a $\left( n + \log |S_r| \right)$-quhits quantum circuit that implements a block-encoding of the normalized operator
\begin{equation}
    \frac{\partial_r^\dagger \partial_r}{(r+1) |S_r|},
\end{equation}
with circuit depth $\mathcal{O}\left( n + \log |S_r| \right)$, using extra $\mathcal{O}(1)$ ancilla qubits and requiring $\mathcal{O}(1)$-time classical preprocessing.
\end{lemma}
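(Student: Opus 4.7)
The plan is to derive the stated block-encoding as a direct invocation of \cref{lemma: entrycomputablematrix} on the boundary matrix $\Mcal_r \in \Rbb^{\binom{n}{r} \times |S_r|}$ constructed in \cref{sec: blockencodingboundaryoperators}, and then reinterpret the block-encoded operator $\Mcal_r^\dagger \Mcal_r / \|\Mcal_r\|_F^2$ as the normalized combinatorial Laplacian $\partial_r^\dagger \partial_r / ((r{+}1)|S_r|)$. The one structural point already noted in the excerpt is that, although $\Mcal_r$ carries a row for every abstract $(r{-}1)$-simplex on $n$ vertices rather than only those in $S_{r-1}$, the extra rows are identically zero, so $\Mcal_r^\dagger \Mcal_r = \partial_r^\dagger \partial_r$ holds exactly; hence block-encoding $\Mcal_r$ is sufficient.

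The first substantive step is to verify the three inputs demanded by \cref{lemma: entrycomputablematrix}. By \cref{eqn: boundarymap}, each column of $\Mcal_r$ indexed by an $r$-simplex $\sigma_r \in S_r$ contains exactly $r{+}1$ nonzeros, one for each omitted vertex, so the column sparsity is $s = r{+}1$. Every such nonzero has magnitude one and sign $(-1)^i$, determined in constant time from the column index and the omitted vertex, which collapses the classical preprocessing required by \cref{lemma: entrycomputablematrix} to $\Ocal(1)$. Counting nonzeros then gives $\|\Mcal_r\|_F^2 = (r{+}1)|S_r|$, which is precisely the normalization appearing in the lemma statement.

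Plugging $N = |S_r|$ and $s = r{+}1$ into the resource estimates of \cref{lemma: entrycomputablematrix} yields the ancilla count $\Ocal(Ns) = \Ocal(|S_r|\,r)$ immediately, and delivers the claimed block-encoding of $\partial_r^\dagger \partial_r / ((r{+}1)|S_r|)$. The place that requires care is the depth bound $\Ocal(\log(rn))$: the address registers involved are those for the $\binom{n}{r}$ row labels, the $|S_r|$ column labels, and the $s = r{+}1$ sparsity pointer per column, whose bit-lengths account to $\Ocal(\log(rn))$, $\Ocal(\log |S_r|)$, and $\Ocal(\log r)$ respectively. Converting these address manipulations into a circuit of depth $\Ocal(\log(rn))$ is precisely what the $\Ocal(|S_r|r)$ ancilla budget in \cref{lemma: entrycomputablematrix} buys, by enabling the nonzero pattern of each column---known classically from $S_r$---to be dispatched in parallel rather than sequentially queried. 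This bookkeeping is the step I expect to be the main obstacle; once the parallel addressing is confirmed to match the template of \cref{lemma: entrycomputablematrix}, the rest of the argument is immediate.
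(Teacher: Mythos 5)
Your proposal is correct and follows the paper's own argument essentially verbatim: construct $\mathcal{M}_r$, verify column sparsity $s=r{+}1$, entries $\pm 1$ (hence $\mathcal{O}(1)$ preprocessing), $\|\mathcal{M}_r\|_F^2=(r{+}1)|S_r|$, observe that the all-zero rows make $\mathcal{M}_r^\dagger\mathcal{M}_r=\partial_r^\dagger\partial_r$, and invoke \cref{lemma: entrycomputablematrix} with $N=|S_r|$, $s=r{+}1$. The only point where you go beyond the paper is the address-register justification of the depth; the paper simply reads the depth off \cref{lemma: entrycomputablematrix} as $\mathcal{O}(\log(sN))=\mathcal{O}(\log((r{+}1)|S_r|))$ and states it as $\mathcal{O}(\log(rn))$, so your extra bookkeeping is not part of (and is no looser than) the paper's own derivation.
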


\begin{algorithm}[htbp!]
\DontPrintSemicolon
\textbf{Input:} Classical graph description $\mathcal{G} = (V, E)$ of a simplicial complex $S$ with $n = |V|$ vertices \;

\vspace{0.5em}
\textbf{Step 1:} Compute the sets $S_r$ and $S_{r-1}$ of $r$- and $(r{-}1)$-simplices using either:  
\begin{itemize}
    \item Clique enumeration via arboricity (\cref{lemma: chiba}), or
    \item Clique enumeration via degeneracy (\cref{lemma: epp})
\end{itemize}

\vspace{0.5em}
\textbf{Step 2:} Construct block-encodings of the following operators using~\cref{lemma: blockencodingboundaryoperator}:
\[
\frac{\partial_r^\dagger \partial_r}{(r+1)|S_r|} \quad \text{and} \quad \frac{\partial_{r+1}^\dagger \partial_{r+1}}{(r+2)|S_{r+1}|} 
\]

\vspace{0.5em}
\textbf{Step 3:} Apply stochastic rank estimation~\cite{ubaru2016fast, ubaru2017fast, ubaru2021quantum} to estimate:
\[
\frac{\dim \ker(\partial_r)}{|S_r|} \quad \text{and} \quad \frac{\dim \ker(\partial_{r+1})}{|S_{r+1}|}
\]
to additive accuracy $\epsilon$

\vspace{0.5em}
\textbf{Step 4:} Compute the normalized Betti number estimate:
\[
\frac{\beta_r}{|S_r|} \approx \frac{\dim \ker(\partial_r)}{|S_r|} + \frac{|S_{r+1}|}{|S_r|} \cdot \frac{\dim \ker(\partial_{r+1})}{|S_{r+1}|} - \frac{|S_{r+1}|}{|S_r|}
\]


\vspace{0.5em}
\textbf{Output:} Estimate of the normalized Betti number $\beta_r / |S_r|$ to additive accuracy $\epsilon$
\caption{\justifying Hybrid quantum-classical algorithm for estimating normalized Betti numbers}
\label{algo: estimatingBetti}
\end{algorithm}

\subsubsection{Estimating Betti numbers}
\label{sec: estimatingbettinumbers}

To compute the $r$-th Betti number of a simplicial complex, we recall that the $r$-th homology group is defined as the quotient $H_r = \ker(\partial_r)/\operatorname{im}(\partial_{r+1})$, and the $r$-th Betti number is given by $\beta_r = \dim H_r$. Since $H_r$ is a quotient space, its dimension satisfies
\begin{align}
    \dim H_r = \dim \ker(\partial_r) - \dim \operatorname{im}(\partial_{r+1}).
\end{align}
Because $\operatorname{im}(\partial_{r+1}) \subseteq C_r$, we use the standard rank-nullity relation
\begin{align}
    \dim \ker(\partial_{r+1}) + \dim \operatorname{im}(\partial_{r+1}) = \dim C_{r+1}.
\end{align}
Assuming that $\dim C_{r+1} = |S_{r+1}|$, where $S_{r+1}$ is the set of $(r+1)$-simplices in the complex, we obtain the identity
\begin{align}
    \beta_r = \dim \ker(\partial_r) + \dim \ker(\partial_{r+1}) - |S_{r+1}|.
\end{align}

Therefore, it suffices to estimate the dimensions of $\ker(\partial_r)$ and $\ker(\partial_{r+1})$. From~\cref{lemma: blockencodingboundaryoperator}, we have block-encodings of the normalized Hermitian operators $\partial_r^\dagger \partial_r$ and $\partial_{r+1}^\dagger \partial_{r+1}$. It is a standard fact that
\begin{align}
    \dim \ker(\partial_r^\dagger \partial_r) = \dim \ker(\partial_r).
\end{align}

To estimate the kernel dimensions, we employ the \emph{stochastic rank estimation} method, which has been studied in~\cite{ubaru2016fast, ubaru2017fast, ubaru2021quantum}. Given a Hermitian matrix $A \in \mathbb{C}^{n \times n}$, this method estimates the rank ratio $\operatorname{rank}(A)/n$ to additive accuracy $\epsilon$, with success probability at least $1 - \eta$, in time complexity
\begin{align}
    \mathcal{O} \left( \frac{1}{\epsilon^2} \log \left(\frac{1}{\eta}\right)  \log \left(\frac{1}{\lambda_{\min}(A)}\right) \right),
\end{align}
where $\lambda_{\min}(A)$ denotes the smallest nonzero eigenvalue of $A$. Since $\dim \ker(A) = n - \operatorname{rank}(A)$, we obtain the kernel dimension from the same estimate.

Applying this subroutine to our block-encoded operators ${\partial_r^\dagger \partial_r}/({(r+1)|S_r|})$, and ${\partial_{r+1}^\dagger \partial_{r+1}}/({(r+2)|S_{r+1}|})$, we estimate the ratios
\begin{align}
    \frac{\dim \ker(\partial_r)}{|S_r|}, \quad \frac{\dim \ker(\partial_{r+1})}{|S_{r+1}|}.
\end{align}
Using these, the normalized $r$-th Betti number is computed via
\begin{align}
    \frac{\beta_r}{|S_r|} = \frac{\dim \ker(\partial_r)}{|S_r|} + \frac{|S_{r+1}|}{|S_r|} \cdot \frac{\dim \ker(\partial_{r+1})}{|S_{r+1}|} - \frac{|S_{r+1}|}{|S_r|}.
\end{align}

We note that the smallest nonzero eigenvalue of $\partial_r^\dagger \partial_r$ is assumed to be at most $\mathcal{O}(1)$. Hence, the minimum eigenvalues of the normalized operators are bounded by
\begin{align}
    \lambda_{\min} \left( \frac{\partial_r^\dagger \partial_r}{(r+1)|S_r|} \right) & \in \mathcal{O} \left( \frac{1}{(r+1)|S_r|} \right), \\
    \lambda_{\min} \left( \frac{\partial_{r+1}^\dagger \partial_{r+1}}{(r+2)|S_{r+1}|} \right) &\in \mathcal{O} \left( \frac{1}{(r+2)|S_{r+1}|} \right).
\end{align}

Therefore, the time complexity for estimating each kernel ratio up to additive precision $\epsilon$ becomes
\begin{align}
    \mathcal{O} \left( \frac{\log \big( (r+1)|S_r| \big)}{\epsilon^2} \right), \quad
    \mathcal{O} \left( \frac{\log \big( (r+2)|S_{r+1}| \big)}{\epsilon^2}\right),
\end{align}
respectively. The entire procedure for estimating $\beta_r$, including both classical preprocessing and quantum post-processing, is formally summarized in \cref{algo: estimatingBetti}, from which we obtain the following theorem.

\begin{theorem}[Main result, quantum-classical hybrid algorithm for Betti number estimation]
\label{thm: estimatingbettinumbers}
Let $\mathcal{G} = (V,E)$ be the graph description of a given simplicial complex $S$ with $|V| = n$ vertices. The hybrid algorithm (\cref{algo: estimatingBetti}) computes the $r$-th normalized Betti number ${\beta_r}/{|S_r|}$ to additive accuracy $\epsilon$ with success probability at least $1-\eta$ using the following resources:
\begin{itemize}
    \item \textnormal{(Quantum component)} A $\left( n + \log |S_r| \right)$-qubits quantum circuit of depth
    \begin{equation}
        \mathcal{O}\left( n + \log |S_r| \right)
    \end{equation}
    executed $\mathcal{O}\left( \log \left( r|S_r| |S_{r+1}| \right) \log\left( \frac{1}{\eta}\right) \frac{1}{\epsilon^2} \right)$ times, using extra $\mathcal{O}\left( 1  \right)$ ancilla qubits.
    
    \item \textnormal{(Classical component)} A classical algorithm with complexity
    \begin{equation}
        \mathcal{O}\left( |E| \alpha(\mathcal{G})^{r-1} \right) \quad \text{or} \quad \mathcal{O}\left( dn \cdot 3^{d/3} \right),
    \end{equation}
    where $\alpha(\mathcal{G})$ and $d$ denote the arboricity and degeneracy of graph $\mathcal{G}$, respectively.
\end{itemize}

\noindent For multiplicative accuracy $\delta$, requiring the quantum circuit to be repeated
\begin{equation}
    \mathcal{O}\left(\frac{|S_r|^2\log(1/\eta)}{\delta^2\beta_r^2} \right)
\end{equation}
times in total.
\end{theorem}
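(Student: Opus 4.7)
The plan is to assemble the theorem from the three ingredients built in \cref{sec: specificationofcomplex,sec: blockencodingboundaryoperators,sec: estimatingbettinumbers}: classical clique enumeration, the block-encoding of the normalized combinatorial Laplacians, and stochastic rank estimation, closed off with the linear-algebraic identity $\beta_r = \dim\ker(\partial_r) + \dim\ker(\partial_{r+1}) - |S_{r+1}|$ from~\cref{sec: estimatingbettinumbers}. The argument is essentially a bookkeeping exercise; the only nontrivial piece is the propagation of the additive error from the two kernel-ratio estimates into the normalized Betti number, and the conversion of that additive guarantee into a multiplicative one.

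First I would handle the classical side: invoking either \cref{lemma: chiba} for enumerating $(r{+}1)$- and $r$-cliques yields $S_r$ and $S_{r-1}$ in time $\mathcal{O}(|E|\alpha(\mathcal{G})^{r-1})$, and \cref{lemma: epp} yields the alternative bound $\mathcal{O}(dn\cdot 3^{d/3})$. The cost of computing the arboricity itself is subdominant to clique enumeration, and likewise the cost of assembling the sparse representation of $\mathcal{M}_r$ and $\mathcal{M}_{r+1}$ is absorbed into these classical runtimes. This directly gives the stated classical component.

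Next I would turn to the quantum side. \cref{lemma: blockencodingboundaryoperator} provides a block-encoding of each normalized operator $\partial_r^\dagger\partial_r/((r{+}1)|S_r|)$ and $\partial_{r+1}^\dagger\partial_{r+1}/((r{+}2)|S_{r+1}|)$ at depth $\mathcal{O}(\log(rn))$ using $\mathcal{O}((|S_r|+|S_{r+1}|)r)$ ancillas in total. Feeding these block-encodings into stochastic rank estimation~\cite{ubaru2016fast,ubaru2017fast,ubaru2021quantum} produces estimates of $\dim\ker(\partial_r)/|S_r|$ and $\dim\ker(\partial_{r+1})/|S_{r+1}|$ to additive accuracy $\epsilon$ with failure probability $\eta$. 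The internal subroutine has query cost $\mathcal{O}(\log(1/\lambda_{\min})/\epsilon^2 \cdot \log(1/\eta))$; plugging in the eigenvalue lower bounds
\begin{equation}
\lambda_{\min}\!\left(\tfrac{\partial_r^\dagger\partial_r}{(r{+}1)|S_r|}\right)\!\in\!\Omega\!\left(\tfrac{1}{r|S_r|}\right),\quad
\lambda_{\min}\!\left(\tfrac{\partial_{r+1}^\dagger\partial_{r+1}}{(r{+}2)|S_{r+1}|}\right)\!\in\!\Omega\!\left(\tfrac{1}{r|S_{r+1}|}\right)
\end{equation}
and folding each block-encoding call (depth $\mathcal{O}(\log(rn))$) into the per-query cost gives the stated per-shot depth $\mathcal{O}(\log(rn)\log(r|S_r||S_{r+1}|))$ and $\mathcal{O}(\log(1/\eta)/\epsilon^2)$ repetitions.

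Finally I would combine the two ratio estimates using the identity for $\beta_r/|S_r|$ in Step 4 of \cref{algo: estimatingBetti}. The worry here is error amplification by the prefactor $|S_{r+1}|/|S_r|$, so the target additive accuracy of the second kernel ratio must be rescaled accordingly; this only costs a constant factor in $\epsilon$ and does not change the asymptotic shot count. For multiplicative accuracy $\delta$ on $\beta_r$, the additive error on $\beta_r/|S_r|$ must be bounded by $\delta\beta_r/|S_r|$, i.e.\ one sets $\epsilon=\Theta(\delta\beta_r/|S_r|)$; substituting into the $1/\epsilon^2$ repetition count yields the claimed $\mathcal{O}(|S_r|^2\log(1/\eta)/(\delta^2\beta_r^2))$ repetitions. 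I expect the main obstacle to be the careful bookkeeping of this error propagation together with a clean justification of the $\Omega(1/(r|S_r|))$ lower bound on the smallest nonzero eigenvalue of the normalized Laplacian, since the rest of the theorem is a direct composition of the earlier lemmas.
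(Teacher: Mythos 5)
Your proposal follows essentially the same route as the paper: it assembles \cref{lemma: chiba} and \cref{lemma: epp} for the classical enumeration, \cref{lemma: blockencodingboundaryoperator} for the block-encodings, and stochastic rank estimation for the two kernel ratios, then combines them via $\beta_r = \dim\ker(\partial_r)+\dim\ker(\partial_{r+1})-|S_{r+1}|$ and rescales $\epsilon$ to $\Theta(\delta\beta_r/|S_r|)$ for the multiplicative bound, exactly as the paper does. The two subtleties you flag---the $\Omega(1/(r|S_r|))$ lower bound on the smallest nonzero eigenvalue of the normalized Laplacian and the error amplification by the prefactor $|S_{r+1}|/|S_r|$ in Step 4---are precisely the points the paper handles only by assumption (the former) or passes over silently (the latter, which is a constant factor only when $|S_{r+1}|/|S_r|=\mathcal{O}(1)$, as in the bounded-degree regime), so your treatment is if anything slightly more careful than the paper's.
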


\section{Dissecting the advantage}
\label{sec: dissectingtheadvantage}

In this section, we provide a detailed analysis of our hybrid framework's performance, identifying its optimal regime and establishing its theoretical advantages. We compare our approach with existing quantum algorithms, specifically Lloyd-Garnerone-Zanardi (LGZ)-type methods~\cite{lloyd2016quantum, berry2024analyzing, schmidhuber2022complexity, hayakawa2022quantum, ubaru2021quantum} for the same problem. 

While the quantum topological data analysis literature includes other approaches such as the cohomology-based algorithm of Nghiem et al.~\cite{nghiem2023quantum}, we focus our comparison on homology-based methods, as the cohomological approach operates on fundamentally different mathematical foundations.

\subsection{Performance analysis of our hybrid algorithm}

Our algorithm, as formalized in~\cref{thm: estimatingbettinumbers}, comprises two components with distinct computational characteristics:
\begin{itemize}
    \item \textit{Quantum Component.} The quantum subroutine achieves logarithmic scaling with respect to both the number of vertices $n$ and the number of simplices, yielding efficient asymptotic complexity for the quantum portion of the computation.
    \item \textit{Classical Component.} The classical preprocessing step exhibits complexity that depends critically on the structural properties of the simplicial complex $S$. As established in~\cref{sec: specificationofcomplex}, \cref{lemma: chiba} and~\cref{lemma: epp} achieve optimal running time when the arboricity $\alpha(\mathcal{G})$ and degeneracy $d$ of the underlying graph are small.
\end{itemize}
Since both $\alpha(\mathcal{G})$ and $d$ are upper bounded by $\Delta := \max_{v \in V} \deg(v)$, the maximum vertex degree in $S$, our algorithm performs optimally when $\Delta = \mathcal{O}(1)$.

\begin{proposition}[Sparse complex regime]
\label{prop:sparse-regime}
Let $S$ be a simplicial complex with underlying graph $\mathcal{G} = (V,E)$ where $|V| = n$. If $\Delta = \mathcal{O}(1)$, then for any dimension $r$, the complex satisfies the sparsity condition $|S_r| \ll \binom{n}{r+1}$.
\end{proposition}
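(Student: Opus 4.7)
The plan is to prove the claim by a direct double-counting argument that bounds $|S_r|$ in terms of $n$ and $\Delta$, and then to compare this bound against $\binom{n}{r+1}$.

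First, I would observe that every $r$-simplex in $S$ corresponds to an $(r{+}1)$-clique in $\mathcal{G}$, and count incidences between vertices and such cliques. If a vertex $v$ lies in an $(r{+}1)$-clique, then the remaining $r$ vertices of the clique must all belong to the neighborhood $N(v)$; hence $v$ belongs to at most $\binom{\deg(v)}{r} \le \binom{\Delta}{r}$ simplices in $S_r$. Since each $r$-simplex contains exactly $r{+}1$ vertices, double counting yields
\begin{equation}
(r+1)\,|S_r| \;=\; \sum_{v \in V}\,\bigl|\{\sigma \in S_r : v \in \sigma\}\bigr| \;\le\; n\binom{\Delta}{r},
\end{equation}
so that $|S_r| \le \tfrac{n}{r+1}\binom{\Delta}{r}$. (Note in particular that $|S_r| = 0$ whenever $r > \Delta$, which is consistent with the bound since $\binom{\Delta}{r} = 0$ in that range.)

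Next, I would substitute $\Delta = \mathcal{O}(1)$ into the above inequality. Since $\binom{\Delta}{r}$ is then a constant (for any fixed $r \le \Delta$, and zero otherwise), we obtain $|S_r| = \mathcal{O}(n)$, with the implicit constant depending only on $\Delta$ and $r$. On the other hand, $\binom{n}{r+1} = \Theta(n^{r+1}/(r+1)!)$ grows polynomially in $n$ of degree $r{+}1$.

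Finally, I would conclude by forming the ratio
\begin{equation}
\frac{|S_r|}{\binom{n}{r+1}} \;\le\; \frac{\tfrac{n}{r+1}\binom{\Delta}{r}}{\binom{n}{r+1}} \;=\; \mathcal{O}\!\left(\frac{1}{n^{r}}\right),
\end{equation}
which tends to zero for any $r \ge 1$ as $n \to \infty$, establishing $|S_r| \ll \binom{n}{r+1}$. I do not foresee a serious obstacle here; the only subtlety worth flagging is the boundary case $r = 0$ (where $|S_0| = n = \binom{n}{1}$ and the sparsity conclusion is trivially tight rather than strict), so I would implicitly restrict attention to the homologically nontrivial range $r \ge 1$, which is the regime relevant to the Betti numbers targeted by \cref{algo: estimatingBetti}.
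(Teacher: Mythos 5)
Your proof is correct and follows essentially the same route as the paper: bound the number of $r$-simplices containing any fixed vertex by a constant depending only on $\Delta$ (the paper uses $\binom{\Delta+1}{r+1}$, you use the slightly tighter $\binom{\Delta}{r}$ with a proper division by $r+1$ from double counting), conclude $|S_r| = \mathcal{O}(n)$, and compare against $\binom{n}{r+1} = \Theta(n^{r+1})$. Your explicit caveat about $r=0$, where $|S_0| = n = \binom{n}{1}$ and the strict sparsity claim fails, is a valid observation that the paper's proof glosses over with its ``for fixed $r$ and large $n$'' qualifier.
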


\begin{proof}
We establish this through a counting argument. For any $r > \Delta$, we have $|S_r| = 0$ since an $r$-simplex requires a complete subgraph on $r+1$ vertices, but no vertex has degree exceeding $\Delta$.

For $r \leq \Delta$, consider any vertex $v_i \in V$. Since $\deg(v_i) \leq \Delta = \mathcal{O}(1)$, vertex $v_i$ participates in at most $\binom{\Delta + 1}{r+1}$ many $r$-simplices. Since $\Delta = \mathcal{O}(1)$, we have
\begin{equation}
\binom{\Delta + 1}{r+1} \leq (\Delta + 1)^r = \mathcal{O}(1).
\end{equation}
Therefore, $|S_r| \leq n \cdot \mathcal{O}(1) = \mathcal{O}(n) \ll \binom{n}{r+1}$ for fixed $r$ and large $n$.
\end{proof}

\begin{remark}
The converse implication also holds: if $|S_r|$ is small relative to the maximum possible number of $r$-simplices, then $\Delta$ must be bounded. This establishes that the bounded-degree condition and the sparse complex regime are essentially equivalent characterizations of our algorithm's optimal performance domain.
\end{remark}

The preceding analysis demonstrates that our hybrid algorithm achieves optimal efficiency in the \emph{sparse complex regime}, characterized by simplicial complexes where each vertex has bounded degree. In the following subsection, we establish that this regime corresponds precisely to where our method achieves exponential advantages over existing quantum approaches.

\subsection{LGZ-like algorithms}

LGZ-type algorithms~\cite{lloyd2016quantum, berry2024analyzing, schmidhuber2022complexity, hayakawa2022quantum, ubaru2021quantum} operate under a quantum oracle model that encodes pairwise vertex connectivity. The state-of-the-art complexity for estimating the $r$-th normalized Betti number ${\beta_r}/{|S_r|}$ to additive error $\epsilon$ using LGZ-type methods is:
\begin{equation}
\mathcal{O} \left( \left( n^2 \sqrt{ \frac{ \binom{n}{r+1} }{|S_r|}} + n \right) \cdot \frac{1}{\epsilon}\right).
\end{equation}

As established in prior work, LGZ-type algorithms achieve optimal performance in the \emph{dense complex regime} where $|S_r| \approx \binom{n}{r+1}$, reducing the complexity to $\mathcal{O}\left( n^2 / \epsilon \right)$.

In contrast, within the sparse regime where $|S_r| \ll \binom{n}{r+1}$, the complexity becomes:
\begin{equation}
\mathcal{O} \left( \left( n^2 \sqrt{ \binom{n}{r+1} } + n \right) \cdot \frac{1}{\epsilon}\right).
\end{equation}
Since $\binom{n}{r+1} \leq n^{r+1}/(r+1)! < n^{r+1}$ for fixed $r$, this complexity is bounded by:
\begin{equation}
\mathcal{O}\left( \frac{n^{2 + (r+1)/2} + n}{\epsilon} \right) = \mathcal{O}\left(  \frac{n^{2 + (r+1)/2}}{\epsilon} \right).
\end{equation}

\subsection{Complexity comparison}

Our hybrid algorithm achieves optimal performance in the bounded-degree regime where $\Delta = \mathcal{O}(1)$, which coincides with the sparse complex setting. In this regime, our total complexity encompasses both classical preprocessing and quantum estimation:
\begin{equation}
\mathcal{O}\left( \frac{ \left( |E| + (n + \log |S_r|)\log\left( r|S_r| |S_{r+1}|\right) \right)\log \frac{1}{\eta}}{\epsilon^2} \right)
\end{equation}
when employing~\cref{lemma: chiba}, or alternatively:
\begin{equation}
\mathcal{O}\left( \frac{(n + \log |S_r|) \log\left( r|S_r| |S_{r+1}|\right)  \log(1/\eta)}{\epsilon^2} \right)
\end{equation}
when using~\cref{lemma: epp}.

The first bound depends on the number of edges $|E|$, which satisfies $|E| \leq \binom{n}{2} = \mathcal{O}(n^2)$. However, in sparse complexes with $\Delta = \mathcal{O}(1)$, we have $|E| = \mathcal{O}(n)$, making the first approach preferable when $|E| = o(n)$.

\begin{theorem}[Quantum advantage in sparse regime]
\label{thm:quantum-advantage}
In the sparse complex regime where $|S_r| \ll \binom{n}{r+1}$ and $\Delta = \mathcal{O}(1)$, our hybrid algorithm achieves a polynomial speedup over LGZ-type algorithms with respect to the number of vertices $n$. Specifically:
\begin{itemize}
\item {LGZ complexity:} $\mathcal{O}\left( n^{2 + (r+1)/2} /{\epsilon} \right)$
\item {Our complexity:} $\mathcal{O}\left( (n + \textnormal{polylog}(n)) /{\epsilon^2} \right)$ when \\ $|E| = \mathcal{O}(n)$
\end{itemize}
Moreover, when $|E| \in \mathcal{O}(1)$ or $|E| \in \mathcal{O}(\log (n))$, our algorithm achieves an exponential or near-exponential speedup, respectively, as the classical preprocessing becomes $\mathcal{O}(1)$ or $\mathcal{O}(\log (n))$.
\end{theorem}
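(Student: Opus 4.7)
The plan is to specialize the two complexity expressions derived in this subsection to the sparse regime and then compare them term by term. The LGZ side of the comparison is already in hand: the bound $\mathcal{O}(n^{2+(r+1)/2}/\epsilon)$ was obtained earlier from the sparse-regime LGZ formula together with $\binom{n}{r+1} < n^{r+1}$, so no further work is required there. The remaining task is to simplify the hybrid complexity from \cref{thm: estimatingbettinumbers} under the hypotheses $\Delta = \mathcal{O}(1)$ and, successively, $|E|=\mathcal{O}(n)$, $|E|=\mathcal{O}(\log n)$, $|E|=\mathcal{O}(1)$.

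First I would invoke \cref{prop:sparse-regime}, which guarantees that $\Delta = \mathcal{O}(1)$ forces $|S_r|,|S_{r+1}| = \mathcal{O}(n)$ at fixed $r$. This has two consequences. First, the quantum circuit depth $\log(rn)\log(r|S_r||S_{r+1}|)$ collapses to $\log(rn)\log(rn^2) = \mathcal{O}(\log^2(rn))$, which is $\textnormal{polylog}(n)$. Second, since both $\alpha(\mathcal{G})$ and $d$ are bounded above by $\Delta$ and therefore by a constant, \cref{lemma: chiba} yields classical preprocessing in time $\mathcal{O}(|E|)$, while \cref{lemma: epp} yields $\mathcal{O}(n)$. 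Whichever one is used, the resulting classical cost is never larger than $\mathcal{O}(n)$ in the sparse regime.

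Combining the classical and quantum contributions as prescribed by \cref{thm: estimatingbettinumbers} produces the total cost $\mathcal{O}\big((|E| + \textnormal{polylog}(n))\log(1/\eta)/\epsilon^2\big)$. Substituting $|E| = \mathcal{O}(n)$ yields the first claimed bound $\mathcal{O}((n + \textnormal{polylog}(n))/\epsilon^2)$, and the ratio against the LGZ bound $n^{2+(r+1)/2}/\epsilon$ is then a polynomial factor of degree $1+(r+1)/2$ in $n$. For the second half of the statement, substituting $|E| = \mathcal{O}(1)$ or $|E| = \mathcal{O}(\log n)$ absorbs the classical term into the polylog term, leaving $\mathcal{O}(\textnormal{polylog}(n)/\epsilon^2)$; comparing with the LGZ polynomial scaling in $n$ then gives an exponential (respectively, near-exponential) speedup.

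The only subtle point I would verify carefully is that the two complexity bounds are being compared on a common family of inputs: the LGZ estimate assumes the sparsity condition $|S_r| \ll \binom{n}{r+1}$, while our bound assumes $\Delta = \mathcal{O}(1)$, and one must check that the latter implies the former so the comparison is meaningful. This is exactly the content of \cref{prop:sparse-regime} together with the converse direction noted in the accompanying remark. Beyond that, the argument is bookkeeping: hold $r$ fixed, track only the $n$- and $\epsilon$-dependence, and observe that the $\epsilon^{-2}$ versus $\epsilon^{-1}$ discrepancy inherited from stochastic rank estimation does not affect the stated speedup, which is phrased purely in terms of the scaling in the number of vertices $n$.
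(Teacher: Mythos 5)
Your proposal is correct and follows essentially the same route as the paper: the theorem is a consolidation of the preceding subsection, obtained by specializing the LGZ bound via $\binom{n}{r+1}<n^{r+1}$ and the hybrid bound via \cref{prop:sparse-regime} (so $|S_r|,|S_{r+1}|=\mathcal{O}(n)$ and the arboricity/degeneracy are $\mathcal{O}(1)$), then substituting the stated scalings of $|E|$. Your added checks—that the arboricity-based preprocessing must be used when $|E|=o(n)$, that the two regimes are compatible via \cref{prop:sparse-regime}, and that the $\epsilon^{-2}$ versus $\epsilon^{-1}$ dependence does not affect the speedup in $n$—match the paper's own remarks.
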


This establishes that our hybrid approach provides significant computational advantages precisely in the regime where LGZ-type algorithms perform poorly, demonstrating complementary strengths between the two methodological paradigms.

\section{Potential applications}
\label{sec: application}

In this section, we identify and analyze scenarios where our hybrid algorithm provides significant computational advantages, thereby expanding the scope of quantum computing applications in topological data analysis.

\subsection{Quantum entanglement as topological features}

We examine the framework established in recent works~\cite{hamilton2024probing, olsthoorn2023persistent, ameneyro2022quantum} that leverage topological data analysis to characterize quantum many-body systems. We first establish the mathematical foundation, then demonstrate how our hybrid algorithm addresses the computational challenges inherent in this approach.

\subsubsection{Mathematical framework}

Consider a multipartite quantum system $\mathcal{H} = \mathcal{H}_1 \otimes \mathcal{H}_2 \otimes \cdots \otimes \mathcal{H}_N$ where $N \in \mathbb{Z}_+$ and each $\mathcal{H}_i$ represents a Hilbert space of known finite dimension. Let $\rho$ denote the quantum state of the entire system, and let $\rho_i$ denote the reduced density matrix for subsystem $i$, obtained by tracing out all other subsystems:
\begin{equation}
\rho_i = \text{Tr}_{\{1,\ldots,N\} \setminus \{i\}} \rho.
\end{equation}

The mutual information between subsystems $i$ and $j$ is defined as:
\begin{equation}
M_{ij} = S_i + S_j - S_{ij},
\end{equation}
where $S_k = -\text{Tr}(\rho_k \log \rho_k)$ denotes the von Neumann entropy of subsystem $k$, and $S_{ij}$ is the joint entropy of subsystems $i$ and $j$.

Following Olsthoorn~\cite{olsthoorn2023persistent}, we define a metric distance between subsystems $i$ and $j$ as:
\begin{equation}
D_{ij} = 2 \log 2 - M_{ij}.
\end{equation}
This distance function satisfies the triangle inequality and thus constitutes a valid metric on the space of subsystems.

\subsubsection{Simplicial complex construction}

Given a threshold parameter $\epsilon > 0$, we construct a graph $\mathcal{G} = (V, E)$ with vertex set $V = \{1, 2, \ldots, N\}$ corresponding to the $N$ subsystems. An edge $(i,j) \in E$ exists if and only if $D_{ij} \leq \epsilon$.

The associated simplicial complex $S$ is constructed via the clique complex construction: a subset $\sigma \subseteq V$ forms a simplex in $S$ if and only if the induced subgraph $\mathcal{G}[\sigma]$ is complete. This yields:
\begin{itemize}
\item 0-simplices: individual vertices (subsystems)
\item 1-simplices: edges connecting subsystems with distance $\leq \epsilon$
\item $r$-simplices: $(r{+}1)$-cliques representing groups of mutually close subsystems
\end{itemize}

By varying the threshold parameter across a sequence $0 < \epsilon_1 \leq \epsilon_2 \leq \cdots \leq \epsilon_L$, we obtain a filtration:
\begin{equation}
S_1 \subseteq S_2 \subseteq \cdots \subseteq S_L,
\end{equation}
where $S_\ell$ denotes the simplicial complex constructed with threshold $\epsilon_\ell$.

\subsubsection{Topological phase classification}

The fundamental insight of~\cite{olsthoorn2023persistent} is that quantum states belonging to distinct phases exhibit different topological signatures. Through computational studies of the Ising chain and XXZ spin chain in transverse fields, the authors demonstrated that Betti numbers computed across the filtration serve as robust topological invariants for phase classification.

Specifically, states within the same quantum phase produce similar Betti number profiles $\{\beta_r(S_\ell)\}_{\ell=1}^L$ across dimensions $r$ and filtration levels $\ell$, while states from different phases exhibit distinct topological signatures.

\subsubsection{Computational challenges and quantum solutions}

The primary computational bottleneck lies in constructing the graph $\mathcal{G}$ from the quantum state $\rho$. Computing the metric distances $D_{ij}$ requires evaluating von Neumann entropies $S_i$, $S_j$, and $S_{ij}$, which classically necessitates:
\begin{enumerate}
\item Full quantum state tomography to obtain classical descriptions of density matrices
\item Eigendecomposition of potentially exponentially large matrices
\item Numerical computation of logarithms and traces
\end{enumerate}

This classical approach scales exponentially with system size, rendering it intractable for large quantum systems.
\begin{itemize}
    \item \textit{Quantum entropy estimation.} Recent advances in quantum algorithms for entropy estimation~\cite{nghiem2025estimation, acharya2020estimating, wang2023quantum} provide efficient methods for computing von Neumann entropies directly from quantum states without full tomography. These algorithms enable efficient estimation of $S_i$, $S_j$, and $S_{ij}$, and consequently the distances $D_{ij}$.
    \item \textit{Classical graph construction.} Once the pairwise distances $\{D_{ij}\}$ are estimated quantum mechanically, the graph construction proceeds classically: for each pair $(i,j)$, we compare $D_{ij}$ with the threshold $\epsilon$ and add edge $(i,j)$ to $E$ if $D_{ij} \leq \epsilon$. This yields the complete classical description of graph $\mathcal{G}$.
    \item \textit{Hybrid Betti number computation.} With the classical graph description in hand, our hybrid algorithm (\cref{algo: estimatingBetti}) can efficiently compute the Betti numbers of the associated simplicial complex. The sparse structure typically arising in quantum systems with local interactions ensures that our algorithm operates in its optimal regime, providing significant computational advantages over purely classical or existing quantum approaches.
\end{itemize}

This integration demonstrates how our hybrid framework bridges quantum state analysis and classical topological computation, enabling scalable topological characterization of quantum many-body systems.

\subsection{Image analysis via topological data analysis}

Topological data analysis has demonstrated significant effectiveness in various image processing applications, including image denoising~\cite{al2024cubical} and image segmentation~\cite{vandaele2020topological}. We present how our hybrid algorithm can enhance computational efficiency in these topological approaches to image analysis.

\subsubsection{Mathematical representation}

A digital image is represented as a collection of pixels, each characterized by a scalar intensity value typically normalized to the interval $[0,1]$. Formally, a two-dimensional image can be encoded as a matrix $X \in \mathbb{R}^{n \times n}$, where each entry $X_{i,j}$ represents the pixel intensity at position $(i,j)$ within the image grid.

For grayscale images, $X_{i,j}$ directly corresponds to the luminance value. For color images, this framework extends naturally by considering separate intensity matrices for each color channel, though we focus on the grayscale case for simplicity of exposition.

\subsubsection{Cubical complex construction}

Following the methodology established in~\cite{al2024cubical, vandaele2020topological}, we construct a topological representation of the image through the following procedure:
\begin{itemize}
    \item \textit{Vertex set.} We associate each pixel position $(i,j)$ with a vertex $v_{i,j}$ in our topological space, yielding a total of $n^2$ potential vertices.
    \item \textit{Threshold filtering.} For a given threshold parameter $\epsilon \in [0,1]$, we define the active vertex set as:
        \begin{equation}
            V_\epsilon = \{v_{i,j} : X_{i,j} \leq \epsilon\}.
        \end{equation}
    This filtering step selects pixels whose intensity values fall below the specified threshold.
    \item \textit{Adjacency structure.} Two vertices $v_{i,j}, v_{i',j'} \in V_\epsilon$ are connected by an edge if their corresponding pixels are spatially adjacent in the image grid. Specifically, we connect vertices if:
        \begin{equation}
            |i - i'| + |j - j'| = 1,
        \end{equation}
        corresponding to 4-connectivity in the discrete grid topology.
        
    \item \textit{Cubical complex.} The resulting structure forms a cubical complex, a natural discrete analogue of cell complexes adapted to the regular grid structure of digital images. Higher-dimensional cubes are formed by considering rectangular regions where all constituent pixels satisfy the threshold condition and maintain appropriate adjacency relationships.
\end{itemize}

\subsubsection{Filtration and topological signatures}

By systematically varying the threshold across an increasing sequence $0 \leq \epsilon_1 \leq \epsilon_2 \leq \cdots \leq \epsilon_L \leq 1$, we obtain a filtration of cubical complexes:
\begin{equation}
S_1 \subseteq S_2 \subseteq \cdots \subseteq S_L,
\end{equation}
where $S_\ell$ denotes the cubical complex constructed with threshold $\epsilon_\ell$.

The Betti numbers $\{\beta_r(S_\ell)\}_{r,\ell}$ computed across this filtration constitute a \emph{topological signature} of the image, encoding structural information at multiple scales and intensity levels.\\

\noindent\textbf{Application to image denoising.} The topological signature provides a principled approach to noise detection and removal. High Betti numbers at low thresholds often indicate the presence of noise-induced spurious topological features. By identifying threshold values where Betti numbers undergo significant transitions, we can distinguish between genuine image structure and noise artifacts.

For instance, if $\beta_0(S_1)$ is large while $\beta_0(S_2)$ is substantially smaller for $\epsilon_2 > \epsilon_1$, this suggests that many small connected components present at threshold $\epsilon_1$ represent noise that can be filtered out by moving to threshold $\epsilon_2$.

\subsubsection{Computational advantages}

\noindent\textbf{Graph construction.} Unlike the quantum entanglement application, obtaining the classical graph description for image-based cubical complexes is computationally straightforward. For any threshold $\epsilon$:

\begin{enumerate}
\item Identify active vertices: $V_\epsilon = \{v_{i,j} : X_{i,j} \leq \epsilon\}$
\item Construct edges: Connect $v_{i,j}, v_{i',j'} \in V_\epsilon$ if $|i-i'| + |j-j'| = 1$
\end{enumerate}

This process requires $O(n^2)$ operations and yields a complete classical description of the graph $\mathcal{G}_\epsilon = (V_\epsilon, E_\epsilon)$.\\

\noindent\textbf{Optimal regime for hybrid algorithm.} Image-based cubical complexes naturally satisfy the structural conditions where our hybrid algorithm excels:

\begin{itemize}
\item \textit{Bounded degree.} Each vertex has at most 4 neighbors due to grid connectivity, ensuring $\Delta \leq 4 = O(1)$
\item \textit{Sparse structure.} The number of edges is $|E_\epsilon| = O(|V_\epsilon|) = O(n^2)$, which is sparse relative to the complete graph
\item \textit{Local connectivity.} The grid structure ensures high local clustering while maintaining global sparsity
\end{itemize}

These properties place image analysis applications squarely within the optimal performance regime of our hybrid algorithm, enabling efficient computation of Betti numbers across filtrations with complexity scaling as $O(n^2 + \text{polylog}(n))$ rather than the exponential scaling that would be required by naive approaches.

This computational efficiency makes real-time topological image processing feasible for high-resolution images and enables the application of sophisticated topological methods to large-scale image datasets.

\subsection{Random geometric complexes}
\label{sec: applicationnormalizedbettinumbers}
As established in~\cref{thm: estimatingbettinumbers}, our hybrid algorithm estimates the normalized Betti number ${\beta_r}/{|S_r|}$ to additive accuracy $\epsilon$ with quantum circuit repetitions scaling as $\mathcal{O}({1}/{\epsilon^2})$. To estimate the actual Betti number $\beta_r$ to multiplicative precision $\delta$, we must set $\epsilon = \delta{\beta_r}/{|S_r|}$, yielding complexity $\mathcal{O}(|S_r|^2/(\delta^2\beta_r^2))$.

This scaling behavior, analyzed in detail by Schmidhuber et al.~\cite{schmidhuber2022complexity}, demonstrates that exponential quantum advantages may be diminished for most inputs, with notable exceptions occurring when $\beta_r \approx |S_r|$. While Berry et al.~\cite{berry2024analyzing} identified specific graph classes satisfying this condition, we present applications where normalized Betti numbers provide intrinsic value, thereby preserving the quantum computational advantages.


The study of topological properties in random geometric complexes, pioneered by Bobrowski~\cite{bobrowski2018topology} and Kahle~\cite{kahle2011random}, provides a natural setting where normalized Betti numbers serve as fundamental statistical measures.

Consider a point set $\mathcal{X} = \{x_1, x_2, \ldots, x_n\} \subseteq \mathbb{R}^d$ sampled independently from a compact domain $K \subset \mathbb{R}^d$. The \v{C}ech complex $C_r(\mathcal{X})$ is constructed by connecting points $x_i, x_j \in \mathcal{X}$ whenever $\|x_i - x_j\| \leq r$ for a radius parameter $r > 0$.

The central objective is to characterize the asymptotic behavior of topological invariants as $n \to \infty$ with $r = r(n)$ following prescribed scaling relationships. In this context, the normalized Betti number ${\beta_k}/{|S_k|}$ emerges as a \emph{topological density measure}, quantifying the average number of $k$-simplices required to generate a non-trivial $k$-dimensional homology class.

Our hybrid framework naturally accommodates this setting through the following pipeline:
\begin{enumerate}
    \item \textit{Distance computation:} Employ the quantum distance estimation algorithm of Lloyd~\cite{lloyd2013quantum} to compute pairwise distances $\|x_i - x_j\|$ in logarithmic time
    \item \textit{Graph construction:} Classically construct the adjacency graph by thresholding distances against radius $r$
    \item \textit{Betti estimation:} Apply our hybrid algorithm to estimate normalized Betti numbers efficiently
\end{enumerate}

This integration preserves exponential quantum speedups in distance computation while leveraging our efficient Betti number estimation for the resulting sparse geometric complexes.

\section{Conclusion}
We have presented a hybrid quantum-classical framework for estimating Betti numbers that leverages the complementary strengths of classical and quantum computation. Our approach combines efficient classical simplex enumeration algorithms~\cite{chiba1985arboricity, eppstein2010listing} with quantum spectral analysis techniques, including quantum state preparation~\cite{zhang2022quantum} and quantum singular value transformation~\cite{gilyen2019quantum}.

Our work makes several key contributions to the intersection of quantum computing and topological data analysis. We developed a hybrid framework that achieves polynomial to exponential speedups over existing quantum approaches in the sparse complex regime, precisely where existing quantum methods encounter computational bottlenecks. Through rigorous complexity analysis, we established that our algorithm achieves optimal performance when the underlying simplicial complex exhibits bounded vertex degrees, a structural condition naturally satisfied in many practical applications including quantum many-body systems, image processing tasks, and materials science problems.

We identified and analyzed concrete application domains where our approach provides significant computational advantages. In quantum many-body system characterization, our method enables efficient topological phase classification through entanglement-based distance metrics. For image analysis applications, the natural grid structure of digital images places these problems squarely within our algorithm's optimal performance regime. We also demonstrated the utility of our approach for analyzing random geometric complexes and characterizing nanoporous materials, where topological invariants serve as robust fingerprints for material properties.

A particularly important aspect of our work concerns the practical utility of normalized Betti numbers. We showed that these quantities provide intrinsic value in multiple application contexts, preserving quantum computational advantages even when exact Betti number estimation becomes prohibitively expensive. This insight is crucial for realizing practical quantum speedups in topological data analysis applications.

Our results establish hybrid quantum-classical computation as a promising paradigm for topological data analysis, achieving practical quantum advantages in a domain of significant theoretical and applied interest. The success of our approach suggests several avenues for future investigation, including extension to alternative complex types such as cubical complexes with non-standard connectivity, development of hybrid algorithms for persistent homology computation, integration with quantum machine learning techniques for topological feature extraction, and experimental validation on near-term quantum devices.

This work demonstrates that carefully designed hybrid approaches can unlock quantum computational advantages for complex mathematical problems, providing a blueprint for future developments in quantum-enhanced scientific computing. By identifying the precise regimes where quantum methods excel and classical preprocessing remains efficient, we establish a framework that maximizes the strengths of both computational paradigms while mitigating their respective limitations.

\section*{Acknowledgements}
We appreciate the discussion with Junseo Lee, Jonathan Wurtz and Pedro Lopes. N.A.N. acknowledges support from the Center for Distributed Quantum Processing. Parts of this work were completed while N.A.N. was an intern at QuEra Computing Inc.

\bibliography{ref.bib}
\bibliographystyle{unsrt}

\newpage
\appendix
\section{Block-encoding and quantum singular value transformation}
\label{sec: summaryofnecessarytechniques}
We briefly summarize the essential quantum tools used in our algorithm. For conciseness, we highlight only the main results and omit technical details, which are thoroughly covered in~\cite{gilyen2019quantum}. An identical summary is also presented in~\cite{lee2025new}.

\begin{definition}[Block-encoding unitary, see e.g.~\cite{low2017optimal, low2019hamiltonian, gilyen2019quantum}]
\label{def: blockencode} 
Let $A$ be a Hermitian matrix of size $N \times N$ with operator norm $\norm{A} < 1$. A unitary matrix $U$ is said to be an \emph{exact block encoding} of $A$ if
\begin{align}
    U = \begin{pmatrix}
       A & * \\
       * & * \\
    \end{pmatrix},
\end{align}
where the top-left block of $U$ corresponds to $A$. Equivalently, one can write
\begin{equation}
    U = \ket{\mathbf{0}}\bra{\mathbf{0}} \otimes A + (\cdots),    
\end{equation}
where $\ket{\mathbf{0}}$ denotes an ancillary state used for block encoding, and $(\cdots)$ represents the remaining components orthogonal to $\ket{\mathbf{0}}\bra{\mathbf{0}} \otimes A$. If instead $U$ satisfies
\begin{equation}
    U = \ket{\mathbf{0}}\bra{\mathbf{0}} \otimes \tilde{A} + (\cdots),
\end{equation}
for some $\tilde{A}$ such that $\|\tilde{A} - A\| \leq \epsilon$, then $U$ is called an {$\epsilon$-approximate block encoding} of $A$. Furthermore, the action of $U$ on a state $\ket{\mathbf{0}}\ket{\phi}$ is given by
\begin{align}
    \label{eqn: action}
    U \ket{\mathbf{0}}\ket{\phi} = \ket{\mathbf{0}} A\ket{\phi} + \ket{\mathrm{garbage}},
\end{align}
where $\ket{\mathrm{garbage}}$ is a state orthogonal to $\ket{\mathbf{0}}A\ket{\phi}$. The circuit complexity (e.g., depth) of $U$ is referred to as the {complexity of block encoding $A$}.
\end{definition}

Based on~\cref{def: blockencode}, several properties, though immediate, are of particular importance and are listed below.
\begin{remark}[Properties of block-encoding unitary]
The block-encoding framework has the following immediate consequences:
\begin{enumerate}[label=(\roman*)]
    \item Any unitary $U$ is trivially an exact block encoding of itself.
    \item If $U$ is a block encoding of $A$, then so is $\Ibb_m \otimes U$ for any $m \geq 1$.
    \item The identity matrix $\Ibb_m$ can be trivially block encoded, for example, by $\sigma_z \otimes \Ibb_m$.
\end{enumerate}
\end{remark}

Given a set of block-encoded operators, various arithmetic operations can be done with them. Here, we simply introduce some key operations that are especially relevant to our algorithm, focusing on how they are implemented and their time complexity, without going into proofs. For more detailed explanations, see~\cite{gilyen2019quantum, camps2020approximate}.

\begin{lemma}[Informal, product of block-encoded operators, see e.g.~\cite{gilyen2019quantum}]
\label{lemma: product}
    Given unitary block encodings of two matrices $A_1$ and $A_2$, with respective implementation complexities $T_1$ and $T_2$, there exists an efficient procedure for constructing a unitary block encoding of the product $A_1 A_2$ with complexity $T_1 + T_2$.
\end{lemma}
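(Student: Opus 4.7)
The plan is to construct the desired block-encoding by sequential composition on a shared system register with fresh ancillas, and then verify by a direct calculation that the resulting top-left block is exactly $A_1 A_2$.

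First I would set up three registers: a system register $R_s$ on which $A_1$ and $A_2$ act, together with two separate ancilla registers $R_1$ and $R_2$ used by the given block-encodings $U_1$ and $U_2$, respectively. The candidate circuit is
\begin{equation}
    V \;:=\; \bigl(U_1 \otimes I_{R_2}\bigr)\bigl(I_{R_1} \otimes U_2\bigr),
\end{equation}
whose depth is clearly $T_1 + T_2$ since the two factors act in sequence on disjoint ancillas. Initializing both ancillas in $\ket{\mathbf{0}}$ and letting $\ket{\phi}$ be an arbitrary state on $R_s$, I would trace through the action using~\eqref{eqn: action}: after $U_2$ the state becomes $\ket{\mathbf{0}}_{R_1}\otimes\bigl(\ket{\mathbf{0}}_{R_2} A_2\ket{\phi} + \ket{G_2}\bigr)$, where $\ket{G_2}$ is supported outside $\ket{\mathbf{0}}_{R_2}$ on the ancilla.

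The next step is to apply $U_1$ on $(R_1, R_s)$ and inspect each summand. On the first summand $U_1$ produces $\ket{\mathbf{0}}_{R_1} A_1 A_2 \ket{\phi} + \ket{G_1}$ with $\ket{G_1}$ supported outside $\ket{\mathbf{0}}_{R_1}$ on the ancilla; on the second summand $U_1$ acts nontrivially on $R_s$ but leaves $R_2$ untouched, so the resulting component still lies outside $\ket{\mathbf{0}}_{R_2}$. Collecting everything,
\begin{equation}
    V\,\ket{\mathbf{0}}_{R_1}\ket{\mathbf{0}}_{R_2}\ket{\phi} \;=\; \ket{\mathbf{0}}_{R_1}\ket{\mathbf{0}}_{R_2}\,A_1 A_2 \ket{\phi} \;+\; \ket{\mathrm{garbage}},
\end{equation}
where the garbage piece has zero overlap with $\ket{\mathbf{0}}_{R_1}\ket{\mathbf{0}}_{R_2}$ on the joint ancilla. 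This is precisely the defining property from~\cref{def: blockencode}, so $V$ is an exact block-encoding of $A_1 A_2$ with combined ancilla register $R_1 \otimes R_2$.

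Finally I would address the approximate case: if $U_i$ is only an $\epsilon_i$-approximate block-encoding of $A_i$, then adding and subtracting the ideal encodings and using unitarity together with $\|A_i\| \le 1$ yields an $(\epsilon_1+\epsilon_2)$-approximate block-encoding of $A_1 A_2$ via the triangle inequality. The main obstacle, such as it is, is not difficulty but careful bookkeeping: one must confirm that the orthogonal components produced by $U_2$ are not rotated back into the $\ket{\mathbf{0}}_{R_2}$ sector by $U_1$ (they are not, since $U_1$ commutes with $I_{R_1 R_s} \otimes \ket{\mathbf{0}}\bra{\mathbf{0}}_{R_2}$), and that the register ordering is kept consistent so that the extracted block reads $A_1 A_2$ rather than $A_2 A_1$.
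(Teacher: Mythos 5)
Your construction is correct and coincides with the standard argument behind this lemma (the paper itself states it without proof, deferring to~\cite{gilyen2019quantum}, where the proof is exactly this sequential composition $\bigl(U_1 \otimes I_{R_2}\bigr)\bigl(I_{R_1}\otimes U_2\bigr)$ on disjoint ancilla registers). Your key bookkeeping point --- that $U_1$ cannot rotate the $U_2$-garbage back into the $\ket{\mathbf{0}}_{R_2}$ sector because it acts trivially on $R_2$ --- is precisely what makes the top-left block equal $A_1A_2$, and the $(\epsilon_1+\epsilon_2)$ error propagation via the triangle inequality and $\|\tilde{A}_i\|\le 1$ is likewise the standard treatment.
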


\begin{lemma}[Informal, tensor product of block-encoded operators, see e.g.~{\cite[Theorem 1]{camps2020approximate}}]\label{lemma: tensorproduct}
    Given unitary block-encodings $\{U_i\}_{i=1}^m$ of multiple operators $\{M_i\}_{i=1}^m$ (assumed to be exact), there exists a procedure that constructs a unitary block-encoding of $\bigotimes_{i=1}^m M_i$ using a single application of each $U_i$ and $\mathcal{O}(1)$ SWAP gates.
\end{lemma}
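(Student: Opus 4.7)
The plan is to show that the naive tensor product $\bigotimes_{i=1}^m U_i$ of the individual block-encoding unitaries, followed by a fixed rearrangement of ancilla and system wires via SWAP gates, already block-encodes $\bigotimes_{i=1}^m M_i$. The conceptual core is that the block-encoding property is multiplicative under tensor products: projecting each $U_i$ onto its ancilla-zero sector returns $M_i$, so projecting the joint operator $\bigotimes_i U_i$ onto the product of ancilla-zero sectors returns $\bigotimes_i M_i$.

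First, I would write each exact block-encoding, per Definition~\ref{def: blockencode}, as $U_i\ket{\mathbf{0}}_{a_i}\ket{\phi_i}_{s_i} = \ket{\mathbf{0}}_{a_i}\, M_i\ket{\phi_i}_{s_i} + \ket{\text{garb}_i}$, where $\ket{\text{garb}_i}$ is orthogonal to every vector of the form $\ket{\mathbf{0}}_{a_i}\ket{\chi}_{s_i}$. Next, I would apply $W := \bigotimes_{i=1}^m U_i$ to a generic product input $\bigotimes_i \ket{\mathbf{0}}_{a_i}\ket{\phi_i}_{s_i}$, obtaining $\bigotimes_i\!\big(\ket{\mathbf{0}}_{a_i} M_i\ket{\phi_i}_{s_i} + \ket{\text{garb}_i}\big)$. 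Expanding this distributively, the single term in which every factor is $\ket{\mathbf{0}}_{a_i} M_i\ket{\phi_i}_{s_i}$ collects to $\big(\bigotimes_i \ket{\mathbf{0}}_{a_i}\big)\otimes\big(\bigotimes_i M_i\ket{\phi_i}_{s_i}\big)$, while every other term contains at least one $\ket{\text{garb}_j}$ factor and so lies in the orthogonal complement of the all-zero ancilla subspace. Finally, I would insert a fixed wire permutation, implemented by SWAP gates, that relocates the interleaved ancilla registers $(a_1,\dots,a_m)$ into a contiguous block at the top of the circuit, matching the block-encoding convention; this permutation is state-independent and requires no further queries to any $U_i$.

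The main obstacle is lightweight but notationally delicate: one must verify that each of the $2^m$ cross terms in the distributive expansion really lies in the orthogonal complement of $\bigotimes_i \ket{\mathbf{0}}_{a_i}$ on the joint ancilla register. This reduces to the elementary fact that a tensor product of vectors is orthogonal to another whenever at least one pair of corresponding tensor factors is orthogonal, but it requires careful tracking of which Hilbert-space factors count as ancilla and which as system, especially after the SWAP layer has rearranged them. Bounding the SWAP overhead is then a routine accounting step given the fixed target permutation, and linearity extends the argument from product inputs to arbitrary states in $\bigotimes_i \mathcal{H}_{s_i}$.
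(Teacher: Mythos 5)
The paper states this lemma without proof, deferring to the cited reference (Camps--Van Beeumen, Theorem~1); your argument --- tensor the $U_i$ together, note that every cross term in the distributive expansion carries at least one garbage factor and hence is orthogonal to the all-zero ancilla subspace, then permute wires to collect the ancillas into a contiguous block --- is exactly the standard proof given there, and it is correct. The only point to double-check is the gate count: gathering $m$ interleaved ancilla registers generally costs a number of SWAPs growing with the total number of qubits rather than $\mathcal{O}(1)$, but this does not affect the substantive claim of a single query to each $U_i$.
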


\begin{lemma}[Informal, linear combination of block-encoded operators, see e.g.~{\cite[Theorem 52]{gilyen2019quantum}}]
    Given the unitary block encoding of multiple operators $\{A_i\}_{i=1}^m$. Then, there is a procedure that produces a unitary block encoding operator of $\sum_{i=1}^m \pm (A_i/m) $ in time complexity $\mathcal{O}(m)$, e.g., using the block encoding of each operator $A_i$ a single time. 
    \label{lemma: sumencoding}
\end{lemma}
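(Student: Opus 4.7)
The plan is to invoke the standard Linear Combination of Unitaries (LCU) construction, specialized to uniformly weighted signed sums. I would introduce an auxiliary index register on $\lceil \log_2 m \rceil$ qubits and define two preparation unitaries on it: $\mathrm{PREP}_L$ sending $\ket{0}\mapsto \frac{1}{\sqrt{m}} \sum_{i=1}^{m} \ket{i}$, implementable by Hadamard gates up to trivial padding, and $\mathrm{PREP}_R$ sending $\ket{0}\mapsto \frac{1}{\sqrt{m}} \sum_{i=1}^{m} s_i \ket{i}$, obtained from $\mathrm{PREP}_L$ followed by the diagonal sign operator $\sum_i s_i \ket{i}\bra{i}$, where $s_i \in \{+1,-1\}$ is the intended sign of $A_i/m$ in the target sum. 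Both preparation unitaries have depth $\Ocal(\log m)$. I would then define a SELECT unitary $\mathrm{SEL} = \sum_{i=1}^{m} \ket{i}\bra{i} \otimes U_i$ on the joint (index + block-encoding ancilla + system) space, realized by a sequence of $m$ controlled-$U_i$ applications so that each input block encoding is used exactly once. The composite circuit is $V = (\mathrm{PREP}_L^\dagger \otimes \Ibb) \cdot \mathrm{SEL} \cdot (\mathrm{PREP}_R \otimes \Ibb)$, with the combined ancilla of the new block encoding being the index register initialized to $\ket{0}_{\mathrm{idx}}$ together with the inherited $\ket{\mathbf{0}}_{\mathrm{anc}}$ of the individual block encodings.

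To confirm that $V$ is a block encoding of $\frac{1}{m} \sum_i s_i A_i$, I would trace its action on $\ket{0}_{\mathrm{idx}}\ket{\mathbf{0}}_{\mathrm{anc}}\ket{\phi}$. After $\mathrm{PREP}_R$ the state is $\frac{1}{\sqrt{m}} \sum_i s_i \ket{i}\ket{\mathbf{0}}\ket{\phi}$; applying SELECT and invoking~\eqref{eqn: action} yields $\frac{1}{\sqrt{m}} \sum_i s_i \ket{i}\bigl(\ket{\mathbf{0}} A_i \ket{\phi} + \ket{\mathrm{garbage}_i}\bigr)$, where each garbage component is orthogonal to $\ket{\mathbf{0}}_{\mathrm{anc}}$ by the defining property of a block encoding. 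Taking the inner product with $\ket{0}_{\mathrm{idx}}$ after $\mathrm{PREP}_L^\dagger$ contributes a factor $\frac{1}{\sqrt{m}}$ from each index term, so the amplitude on $\ket{0}_{\mathrm{idx}}\ket{\mathbf{0}}_{\mathrm{anc}}$ equals $\frac{1}{m} \sum_i s_i A_i \ket{\phi}$, as required. The total cost is $m$ queries to the given $U_i$'s through SELECT plus $\Ocal(\log m)$ gates in the PREP unitaries, matching the claimed $\Ocal(m)$ complexity.

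The main obstacle is purely bookkeeping. One has to track carefully which registers play the role of new ancilla (the index register) versus inherited block-encoding ancilla, verify that the garbage components from distinct $U_i$'s remain orthogonal to $\ket{\mathbf{0}}_{\mathrm{anc}}$ after the outer projection, and ensure that the signs are placed in $\mathrm{PREP}_R$ rather than $\mathrm{PREP}_L$ so that the overlap $\langle 0 | \mathrm{PREP}_L^\dagger | i\rangle = 1/\sqrt{m}$ (without a sign) combines multiplicatively with $s_i/\sqrt{m}$ to give the desired $s_i/m$ weighting, rather than an unwanted $s_i^2/m = 1/m$. Once these placements are fixed, no further nontrivial estimates are needed; the result follows by linearity from the LCU identity.
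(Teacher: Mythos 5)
Your construction is correct: it is the standard LCU (state-preparation-pair plus SELECT) argument, which is precisely how the cited Theorem 52 of Gily\'en et al.\ establishes this fact; the paper itself states the lemma without proof, deferring to that reference. Your sign bookkeeping (placing the signs in $\mathrm{PREP}_R$ only, so the overlap with the unsigned $\mathrm{PREP}_L$ yields $s_i/m$ rather than $1/m$) and the orthogonality of the garbage components to $\ket{\mathbf{0}}_{\mathrm{anc}}$ are handled properly, so nothing further is needed.
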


\begin{lemma}[Informal, Scaling multiplication of block-encoded operators] 
\label{lemma: scale}
    Given a block encoding of some matrix $A$, as in~\cref{def: blockencode}, the block encoding of $A/p$ where $p > 1$ can be prepared with an extra $\mathcal{O}(1)$ cost.
\end{lemma}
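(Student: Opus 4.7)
\textbf{Proof proposal for Lemma \ref{lemma: scale}.} The plan is to construct the desired block-encoding of $A/p$ by ``diluting'' the existing block-encoding $U$ of $A$ with a single auxiliary qubit prepared in a suitable superposition, so that the amplitude of the relevant branch is exactly $1/p$. Concretely, introduce one fresh ancilla qubit (call it register $c$) and define a single-qubit rotation $R$ by
\begin{equation}
R\ket{0}_c = \frac{1}{p}\ket{0}_c + \sqrt{1-\frac{1}{p^2}}\,\ket{1}_c,
\end{equation}
which is a legitimate unitary precisely because the hypothesis $p>1$ ensures $1/p^2\le 1$. Then set
\begin{equation}
V := \bigl(\ket{0}\!\bra{0}_c\otimes U + \ket{1}\!\bra{1}_c\otimes W\bigr)\cdot(R\otimes I_a\otimes I_s),
\end{equation}
where $W$ is any fixed unitary on the original ancilla--system register satisfying $(\bra{\mathbf{0}}_a\otimes I_s)W(\ket{\mathbf{0}}_a\otimes I_s)=0$ (for example $W=X_a\otimes I_s$). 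Each factor is unitary by construction, so $V$ is a unitary acting on the single new qubit plus the original block-encoding registers.

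Next, I would verify the block-encoding identity directly. Applying $V$ to $\ket{0}_c\ket{\mathbf{0}}_a\ket{\psi}_s$ and expanding the action of $R$ followed by the controlled application yields
\begin{equation}
V\ket{0}_c\ket{\mathbf{0}}_a\ket{\psi}_s=\tfrac{1}{p}\ket{0}_c\,U\ket{\mathbf{0}}_a\ket{\psi}_s+\sqrt{1-\tfrac{1}{p^2}}\,\ket{1}_c\,W\ket{\mathbf{0}}_a\ket{\psi}_s.
\end{equation}
Using $U\ket{\mathbf{0}}_a\ket{\psi}_s=\ket{\mathbf{0}}_a A\ket{\psi}_s+\ket{\mathrm{garbage}}$ from \cref{eqn: action} and the defining property of $W$, projecting onto $\ket{0}_c\ket{\mathbf{0}}_a$ gives exactly $(A/p)\ket{\psi}_s$. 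Hence $V$ is a block-encoding of $A/p$ by \cref{def: blockencode}.

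Finally, I would tally the resource overhead. The construction requires one extra ancilla qubit, one single-qubit rotation $R$, the fixed unitary $W$ (e.g.\ a single $X$ gate), and one layer of control on $U$. The first three contribute $\Ocal(1)$ depth and gate count. The only potentially subtle point is the cost of converting $U$ into its controlled version, which I expect to be the main obstacle to justify rigorously in a black-box model; however, in the standard block-encoding framework of \cite{gilyen2019quantum} used throughout this paper, controlled access to any block-encoding unitary is available at $\Ocal(1)$ multiplicative overhead (this is a routine assumption underpinning \cref{lemma: product,lemma: tensorproduct,lemma: sumencoding}). Combining these observations, $V$ is implemented with circuit depth equal to that of $U$ plus $\Ocal(1)$, establishing the claim.
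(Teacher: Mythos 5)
Your construction is correct: the single-qubit rotation $R$ is unitary precisely because $p>1$, and projecting $V\ket{0}_c\ket{\mathbf{0}}_a\ket{\psi}_s$ onto $\ket{0}_c\ket{\mathbf{0}}_a$ does yield $(A/p)\ket{\psi}_s$, so $V$ block-encodes $A/p$ with respect to the enlarged ancilla flag $\ket{0}_c\ket{\mathbf{0}}_a$ at $\Ocal(1)$ extra cost. The paper itself gives no proof of \cref{lemma: scale} (it is stated informally in the background appendix as a standard fact from the block-encoding literature), so there is no in-paper argument to match; your proof is a legitimate self-contained justification. Two remarks. First, the condition you impose on $W$, namely $(\bra{\mathbf{0}}_a\otimes I_s)W(\ket{\mathbf{0}}_a\otimes I_s)=0$, is superfluous: the projection onto $\ket{0}_c$ already annihilates the $\ket{1}_c$ branch regardless of what $W$ does, so you may take $W=I$. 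Second, the one caveat you flag — the cost of controlling $U$ — can be eliminated entirely by a simpler construction: take $V=U\otimes R'$ where $R'$ is any single-qubit unitary with $\bra{0}R'\ket{0}=1/p$ (e.g.\ $R'=\bigl(\begin{smallmatrix}1/p & \sqrt{1-1/p^2}\\ \sqrt{1-1/p^2} & -1/p\end{smallmatrix}\bigr)$). This is just \cref{lemma: tensorproduct} applied to $A$ and the $1\times1$ matrix $[1/p]$, uses $U$ once uncontrolled, and gives the same result with strictly fewer assumptions. Your version buys nothing over this except generality you do not need, but it is not wrong.
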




\begin{lemma}[Matrix inversion, see e.g.~\cite{gilyen2019quantum, childs2017quantum}]\label{lemma: matrixinversion}
Given a block encoding of some matrix $A$  with operator norm $||A|| \leq 1$ and block-encoding complexity $T_A$, then there is a quantum circuit producing an $\epsilon$-approximated block encoding of ${A^{-1}}/{\kappa}$ where $\kappa$ is the conditional number of $A$. The complexity of this quantum circuit is $\mathcal{O}\left( \kappa T_A \log \left({1}/{\epsilon}\right)\right)$. 
\end{lemma}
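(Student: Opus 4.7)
The plan is to invoke the quantum singular value transformation (QSVT) framework, viewing matrix inversion as the evaluation of the scalar function $f(x) = 1/(2\kappa x)$ on the singular values of $A$. Since $\|A\|\leq 1$ and the condition number is $\kappa$, the nonzero singular values of $A$ lie in the interval $[1/\kappa,\,1]$. If we can find an odd polynomial $P(x)$ of bounded degree that uniformly approximates $1/(2\kappa x)$ on $[1/\kappa,\,1]$ and satisfies $|P(x)|\leq 1$ on $[-1,1]$, then QSVT applied to the given block-encoding of $A$ will produce a block-encoding of $P^{(SV)}(A)$, which approximates $A^{-1}/(2\kappa)$; a trivial rescaling then yields the advertised $A^{-1}/\kappa$ target.

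The first step is to quote the standard polynomial approximation result: there exists an odd polynomial $P$ of degree $d = \mathcal{O}(\kappa \log(1/\epsilon))$ such that $\sup_{x\in[1/\kappa,1]}|P(x) - 1/(2\kappa x)| \leq \epsilon$ and $\|P\|_{[-1,1]} \leq 1$. The construction proceeds by truncating a Chebyshev series expansion of a smoothed version of $1/x$, with the smoothing required to preserve the boundedness property on $[-1,1]$ while leaving the approximation intact on the subinterval where the singular values live. I would cite this directly from~\cite{gilyen2019quantum, childs2017quantum}.

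The second step is to apply the QSVT machinery to the block-encoding of $A$, using the set of phase factors $\{\phi_0,\phi_1,\ldots,\phi_d\}$ that implement the polynomial $P$. By the standard QSVT theorem, alternating $d$ applications of the block-encoding $U$ of $A$ (together with its inverse and controlled reflections through the ancilla subspace) produces an exact block-encoding of $P^{(SV)}(A)$, where $P^{(SV)}$ denotes the singular value transformation induced by $P$. Since $P$ is odd, $P^{(SV)}(A) = \sum_i P(\sigma_i) |u_i\rangle\langle v_i|$ in terms of the singular value decomposition of $A$, which is precisely an $\epsilon$-approximation of $A^{-1}/(2\kappa)$ in operator norm. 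Each of the $d$ layers of the QSVT circuit uses $U$ or $U^\dagger$ once, plus $\mathcal{O}(1)$ ancilla-controlled single-qubit rotations, so the overall depth is $\mathcal{O}(d\, T_A) = \mathcal{O}(\kappa T_A \log(1/\epsilon))$.

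The main obstacle is the polynomial approximation step: the bulk of the technical work lies in constructing an odd polynomial that simultaneously has degree $\mathcal{O}(\kappa\log(1/\epsilon))$, $\epsilon$-approximates $1/x$ (up to the $1/(2\kappa)$ scale) on $[1/\kappa,1]\cup[-1,-1/\kappa]$, and remains bounded by $1$ in magnitude on $[-1,1]$ so as to be admissible for QSVT. Once this approximation lemma is in hand, the remaining pieces—the QSVT identity itself and the counting of $U$-queries—are straightforward applications of the results summarized earlier in this appendix, and the claimed complexity $\mathcal{O}(\kappa T_A \log(1/\epsilon))$ follows immediately from multiplying the polynomial degree by the per-layer block-encoding cost $T_A$.
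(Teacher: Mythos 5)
Your proposal is correct and matches the standard argument: the paper states this lemma without proof, deferring entirely to the cited references, and your sketch (odd polynomial approximation of $1/x$ on $[1/\kappa,1]$ of degree $\mathcal{O}(\kappa\log(1/\epsilon))$ fed into QSVT, with the query count multiplying through) is precisely how those references establish it. The only cosmetic point is the factor of $2$ in $A^{-1}/(2\kappa)$ versus the stated $A^{-1}/\kappa$, which is a normalization convention commonly absorbed into the constant and does not affect the claimed complexity.
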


\section{A review of algebraic topology}
\label{sec: reviewofalgebraictopology}

This appendix provides a concise overview of the fundamental concepts in algebraic topology relevant to our work. We primarily follow the exposition of Nakahara~\cite{nakahara2018geometry}, to which we refer interested readers for comprehensive treatment of the subject.

\subsection{Simplices and simplicial complexes}

\begin{definition}[Simplex]
\label{def:simplex}
Let $p_0, p_1, \ldots, p_r \in \mathbb{R}^m$ be geometrically independent points where $m \geq r$. The $r$-simplex $\sigma_r = [p_0, p_1, \ldots, p_r]$ is defined as:
\begin{equation}
\sigma_r = \left\{ x \in \mathbb{R}^m : x = \sum_{i=0}^r c_i p_i, \quad c_i \geq 0, \quad \sum_{i=0}^r c_i = 1 \right\},
\end{equation}
where the coefficients $(c_0, c_1, \ldots, c_r)$ are called the barycentric coordinates of $x$.
\end{definition}

Geometrically, a 0-simplex $[p_0]$ represents a point, a 1-simplex $[p_0, p_1]$ represents a line segment, a 2-simplex $[p_0, p_1, p_2]$ represents a triangle, and higher-dimensional simplices generalize this pattern to higher dimensions.

An $r$-simplex can be assigned an orientation. For instance, the 1-simplex $[p_0, p_1]$ has orientation $p_0 \to p_1$, which differs from $[p_1, p_0]$. Throughout this work, we adopt the convention that for an $r$-simplex $[p_0, p_1, \ldots, p_r]$, the indices are ordered from low to high, indicating the canonical orientation.

\begin{definition}[Face and simplicial complex]
For an $r$-simplex $[p_0, p_1, \ldots, p_r]$, any $(s+1)$-subset of its vertices defines an $s$-face $\sigma_s$ where $s \leq r$. A simplicial complex $K$ is a finite collection of simplices satisfying:
\begin{enumerate}
\item Every face of a simplex in $K$ is also in $K$
\item The intersection of any two simplices in $K$ is either empty or a common face of both simplices
\end{enumerate}
The dimension of $K$ is $\dim(K) = \max\{r : \sigma_r \in K\}$.
\end{definition}

\subsection{Chain groups and homomorphisms}

\begin{definition}[Chain group]
\label{def: chaingroup}
Let $K$ be an $n$-dimensional simplicial complex. The $r$-th chain group $C_r^K$ is the free abelian group generated by the oriented $r$-simplices of $K$. For $r > \dim(K)$, we define $C_r^K = 0$. Formally, let $S_r^K = \{\sigma_{r,1}, \sigma_{r,2}, \ldots, \sigma_{r,|S_r^K|}\}$ denote the set of $r$-simplices in $K$. An $r$-chain is an element of the form:
\begin{equation}
c_r = \sum_{i=1}^{|S_r^K|} c_i \sigma_{r,i}
\end{equation}
where $c_i \in \mathbb{Z}$ are integer coefficients. The group operation is defined by:
\begin{equation}
c_r^{(1)} + c_r^{(2)} = \sum_{i=1}^{|S_r^K|} \left( c_i^{(1)} + c_i^{(2)} \right) \sigma_{r,i},
\end{equation}
making $C_r^K$ a free abelian group of rank $|S_r^K|$.
\end{definition}

\subsection{Boundary operators and homology}

The boundary operator $\partial_r : C_r^K \to C_{r-1}^K$ is fundamental to homological algebra.

\begin{definition}[Boundary operator]
For an $r$-simplex $[p_0, p_1, \ldots, p_r]$, the boundary operator is defined as:
\begin{equation}
\partial_r [p_0, p_1, \ldots, p_r] = \sum_{i=0}^r (-1)^i [p_0, p_1, \ldots, \hat{p_i}, \ldots, p_r],
\end{equation}
where $\hat{p_i}$ indicates that vertex $p_i$ is omitted, yielding an $(r{-}1)$-simplex.

For an $r$-chain $c_r = \sum_{i=1}^{|S_r^K|} c_i \sigma_{r,i}$, we extend linearly:
\begin{equation}
\partial_r c_r = \sum_{i=1}^{|S_r^K|} c_i \partial_r \sigma_{r,i}.
\end{equation}
\end{definition}

The boundary operators form a chain complex:
\begin{equation}
0 \xrightarrow{} C_n^K \xrightarrow{\partial_n} C_{n-1}^K \xrightarrow{\partial_{n-1}} \cdots \xrightarrow{\partial_1} C_0^K \xrightarrow{\partial_0} 0.
\end{equation}

\begin{definition}[Cycles, boundaries, and homology]
An $r$-chain $c_r$ is called an $r$-cycle if $\partial_r c_r = 0$. The collection of all $r$-cycles forms the $r$-cycle group $Z_r^K = \ker(\partial_r)$. Conversely, an $r$-chain $c_r$ is called an $r$-boundary if there exists an $(r{+}1)$-chain $d_{r+1}$ such that $\partial_{r+1} d_{r+1} = c_r$. The set of all $r$-boundaries forms the $r$-boundary group $B_r^K = \textnormal{im}(\partial_{r+1})$.
\end{definition}
A fundamental property of boundary operators is that $\partial_r \circ \partial_{r+1} = 0$, which ensures that every boundary is also a cycle, i.e., $B_r^K \subseteq Z_r^K$. This inclusion allows us to define the $r$-th homology group as the quotient group $H_r^K = Z_r^K / B_r^K$, which captures the notion of cycles that are not boundaries.

\begin{definition}[Betti numbers]
The $r$-th Betti number is defined as:
\begin{equation}
\beta_r = \textnormal{rank}(H_r^K) = \textnormal{rank}(Z_r^K) - \textnormal{rank}(B_r^K).
\end{equation}
\end{definition}

For computational purposes, we can utilize the combinatorial Laplacian:
\begin{equation}
\Delta_r = \partial_{r+1} \partial_{r+1}^\dagger + \partial_r^\dagger \partial_r,
\end{equation}
where $\partial_r^\dagger$ denotes the adjoint of $\partial_r$. A fundamental result in algebraic topology establishes that:
\begin{equation}
H_r^K \cong \ker(\Delta_r),
\end{equation}
providing a direct method for computing Betti numbers via kernel dimension.

\subsection{Topological invariance}
A central theorem in algebraic topology states that homology groups constitute topological invariants~\cite{hatcher2005algebraic}:

\begin{theorem}[Topological invariance of homology]
If two topological spaces $X$ and $Y$ are homeomorphic, then their homology groups are isomorphic: $H_r^X \cong H_r^Y$ for all $r \geq 0$. Consequently, their Betti numbers are equal: $\beta_r^X = \beta_r^Y$.
\end{theorem}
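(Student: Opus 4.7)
The plan is to pass through singular homology, which is manifestly defined at the level of topological spaces without reference to a triangulation, and then transfer the conclusion to the simplicial setting used in the body of the paper. First I would recall the singular chain complex $C_\bullet^{\mathrm{sing}}(X)$, whose $r$-chains are formal $\mathbb{Z}$-linear combinations of continuous maps $\Delta^r \to X$ from the standard $r$-simplex, with the boundary $\partial_r$ defined by the alternating-sum-of-faces formula (\cref{eqn: boundarymap}) applied to face inclusions. From $\partial_r \circ \partial_{r+1} = 0$, one obtains the singular homology groups $H_r^{\mathrm{sing}}(X)$.

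Next I would establish functoriality: any continuous map $f : X \to Y$ induces a chain map $f_\sharp : C_\bullet^{\mathrm{sing}}(X) \to C_\bullet^{\mathrm{sing}}(Y)$ by postcomposition, $\sigma \mapsto f \circ \sigma$, and this commutes with $\partial$ because face inclusions of $\Delta^r$ are fixed. Passing to homology yields $f_* : H_r^{\mathrm{sing}}(X) \to H_r^{\mathrm{sing}}(Y)$, with $(g \circ f)_* = g_* \circ f_*$ and $(\mathrm{id}_X)_* = \mathrm{id}_{H_r^{\mathrm{sing}}(X)}$. Now suppose $f : X \to Y$ is a homeomorphism with inverse $g$. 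Applying the functor to $g \circ f = \mathrm{id}_X$ and $f \circ g = \mathrm{id}_Y$ gives $g_* \circ f_* = \mathrm{id}$ and $f_* \circ g_* = \mathrm{id}$, so $f_*$ is an isomorphism $H_r^{\mathrm{sing}}(X) \cong H_r^{\mathrm{sing}}(Y)$ for every $r \geq 0$.

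To bridge back to the simplicial homology defined in \cref{def: chaingroup}, I would invoke the standard comparison theorem: for any simplicial complex $K$ with geometric realization $|K|$, there is a natural isomorphism $H_r^K \cong H_r^{\mathrm{sing}}(|K|)$. The construction sends a simplicial chain $\sum c_i \sigma_{r,i}$ to the singular chain obtained by regarding each oriented simplex as a singular simplex via the canonical affine map $\Delta^r \to |K|$; the proof that this chain map is a quasi-isomorphism proceeds by an acyclic-carrier or barycentric-subdivision argument. Combining this with the previous paragraph, if $|K_X| \cong X$ and $|K_Y| \cong Y$ are triangulations and $X \cong Y$, then $H_r^{K_X} \cong H_r^{\mathrm{sing}}(X) \cong H_r^{\mathrm{sing}}(Y) \cong H_r^{K_Y}$.

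Finally, the Betti-number corollary is immediate: since $\beta_r = \mathrm{rank}(H_r)$ and rank is an invariant of finitely generated abelian groups under isomorphism, $H_r^X \cong H_r^Y$ forces $\beta_r^X = \beta_r^Y$. The main obstacle I expect is not the functoriality argument, which is essentially formal, but rather the simplicial-versus-singular comparison step; a careful treatment requires either the simplicial approximation theorem together with a chain-level homotopy between a map and its simplicial approximation, or the acyclic models / acyclic carrier machinery to produce the chain equivalence. For a self-contained exposition I would most likely cite \cite{hatcher2005algebraic, nakahara2018geometry} for this comparison and focus the written argument on the clean categorical chain $X \cong Y \Rightarrow f_* \text{ invertible} \Rightarrow H_r^X \cong H_r^Y \Rightarrow \beta_r^X = \beta_r^Y$.
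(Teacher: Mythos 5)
The paper does not actually prove this statement; it is quoted as a standard fact with a citation to \cite{hatcher2005algebraic}, so there is no in-paper argument to compare against. Your sketch is the standard textbook proof and is correct in outline: functoriality of singular homology makes the homeomorphism-invariance part essentially formal, and the only substantive input is the simplicial--singular comparison isomorphism $H_r^K \cong H_r^{\mathrm{sing}}(|K|)$, which you correctly identify as the hard step and appropriately defer to the acyclic-carrier/subdivision machinery in the references. One small point worth making explicit: as literally stated for arbitrary ``topological spaces,'' the theorem only makes sense if $H_r^X$ is read as singular homology (not every space is triangulable); your route handles this correctly, since the comparison theorem is invoked only when connecting back to the paper's simplicial chain groups, at which point one must assume $X$ and $Y$ are realized as polyhedra $|K_X|$ and $|K_Y|$ --- exactly the setting of the paper.
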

This invariance property makes Betti numbers powerful tools for topological classification and forms the mathematical foundation for their applications in topological data analysis.

\end{document}